\definecolor{darkblue}{rgb}{0.2,0.2,0.6}
\def\dd{\mathinner{.\,.}}
\newcommand{\cO}{\mathcal{O}}
\newcommand{\Oh}{\cO}
\newcommand{\per}{\textsf{per}}
\newcommand{\iden}{\textsf{id}}
\newcommand{\F}{\mathcal{F}}
\newcommand{\ttop}{\mathit{top}}
\newcommand{\ppop}{\mathit{pop}}
\newcommand{\ppush}{\mathit{push}}
\newcommand{\Whites}{\mathit{Whites}}
\newcommand{\hper}{\textsf{hper}}
\newcommand{\vper}{\textsf{vper}}
\newcommand{\width}{\textsf{width}}
\newcommand{\height}{\textsf{height}}
\newcommand{\defproblem}[3]{
  \vspace{2mm}
\noindent\fbox{
  \begin{minipage}{0.96\textwidth}
  \textsc{#1}\\
  {\bf{Input:}} #2  \\
  {\bf{Output:}} #3
  \end{minipage}
  }
  \vspace{2mm}
}
  \theoremstyle{plain}
  \newtheorem{theorem}{Theorem}
  \newtheorem{lemma}{Lemma}  
  \newtheorem{corollary}[theorem]{Corollary} 
  \newtheorem{fact}{Fact}
  \newtheorem{observation}{Observation}
  \theoremstyle{definition}
  \newtheorem{definition}{Definition}
  \newtheorem{remark}[definition]{Remark}
  \newtheorem{claim}{Claim}
\DeclarePairedDelimiter{\floor}{\lfloor}{\rfloor}
\newcommand{\SP}{\mathit{SpecialPoints}}
\newcommand{\Thick}{\mathit{ThickQuartics}}
\title{The number of repetitions in 2D-strings}
\author[1,2]{Panagiotis Charalampopoulos}
\author[2]{Jakub Radoszewski}
\author[2]{Wojciech Rytter}
\author[2]{Tomasz Waleń}
\author[2]{Wiktor Zuba}
\affil[1]{Department of Informatics, King's College London, UK\\
    \texttt{panagiotis.charalampopoulos@kcl.ac.uk}}
\affil[2]{Institute of Informatics, University of Warsaw, Poland\\
    \texttt{$\{$jrad,rytter,walen,w.zuba$\}$@mimuw.edu.pl}}
\date{\vspace{-5ex}}
\begin{document}

\maketitle

\thispagestyle{empty}

\begin{abstract}
The notions of periodicity and repetitions in strings, and hence these of runs and squares, naturally extend to two-dimensional strings.
We consider two types of repetitions in 2D-strings: \emph{2D-runs} and \emph{quartics} (quartics are a 2D-version of squares in standard strings).
Amir et al.~introduced 2D-runs, showed that there are $\cO(n^3)$ of them in an $n \times n$ 2D-string and presented a simple construction
giving a lower bound of $\Omega(n^2)$ for their number (\emph{Theoretical Computer Science}, 2020).
We make a significant step towards closing the gap between these bounds by showing that the number of 2D-runs in an $n \times n$ 2D-string is $\cO(n^2 \log^2 n)$.
In particular, our bound implies that the $\cO(n^2\log n + \textsf{output})$ run-time of the algorithm of Amir et al.\ for computing 2D-runs is also $\cO(n^2 \log^2 n)$.
We expect this result to allow for exploiting 2D-runs algorithmically in the
area of 2D pattern matching.

A quartic is a 2D-string composed of $2 \times 2$ identical blocks (2D-strings) that was introduced by Apostolico and Brimkov (\emph{Theoretical Computer Science}, 2000), where by
quartics they meant only \emph{primitively rooted} quartics, i.e.~built of a primitive block.
Here our notion of quartics is more general and analogous to that of squares in 1D-strings.
Apostolico and Brimkov showed that there are $\cO(n^2 \log^2 n)$ occurrences of primitively rooted quartics in an $n \times n$ 2D-string and that this bound is attainable. Consequently the number of 
distinct primitively rooted quartics is $\cO(n^2 \log^2 n)$. 
The straightforward bound for the maximal number of distinct general quartics is $\cO(n^4)$.
Here, we prove that the number of distinct general quartics is also $\cO(n^2 \log^2 n)$. This extends the rich combinatorial study of the number of distinct squares in a 1D-string, that was initiated by Fraenkel and Simpson (\emph{Journal of Combinatorial Theory, Series A}, 1998), to two dimensions.

Finally, we show some algorithmic applications of 2D-runs.
Specifically, we present algorithms for computing all occurrences of primitively rooted quartics and 
counting all general distinct quartics in $\cO(n^2 \log^2 n)$ time, which is quasi-linear with respect to
the size of the input. The former algorithm is optimal due to the lower bound of Apostolico and Brimkov. The latter can be seen as a continuation of works on enumeration of distinct squares in 1D-strings using runs (Crochemore et al., \emph{Theoretical Computer Science}, 2014). However, the methods used in 2D are different because of different properties of 2D-runs and quartics. 
\end{abstract}

\clearpage
\setcounter{page}{1}

\section{Introduction}
Periodicity is one of the main and most elegant notions in stringology.
It has been studied extensively both from the combinatorial and the algorithmic perspective; see e.g.~the books~\cite{DBLP:books/daglib/0020103,DBLP:books/daglib/0020111,DBLP:books/daglib/0019130}.
A classic combinatorial result is the periodicity lemma due to Fine and Wilf~\cite{fine1965uniqueness}.
From the algorithmic side, periodicity often poses challenges in pattern matching, due to the following fact: a pattern $P$ can have many occurrences in a text $T$ that are ``close'' to each other if and only if $P$ has a ``small'' period.
On the other hand, the periodic structure indeed allows us to overcome such challenges; see~\cite{DBLP:books/daglib/0020103,DBLP:books/daglib/0020111}.

Runs, also known as maximal repetitions, are a fundamental notion in stringology.
A run is a periodic fragment of the text that cannot be extended without changing the period. Runs were introduced in~\cite{DBLP:journals/tcs/IliopoulosMS97}. Kolpakov and Kucherov presented an algorithm to compute all runs in a string in time linear with respect to the length of the string over a linearly-sortable alphabet~\cite{KK:99}.
Runs fully capture the periodicity of the underlying string and, since the publication of the algorithm for their linear-time computation, they have assumed a central role in algorithm design for strings.
They have been exploited for text indexing~\cite{DBLP:conf/stoc/KempaK19}, answering internal pattern matching queries in texts~\cite{DBLP:conf/isaac/Charalampopoulos19,DBLP:conf/soda/KociumakaRRW15}, or reporting repetitions in a string~\cite{DBLP:conf/esa/AmirBCK19,DBLP:conf/cpm/Charalampopoulos20,DBLP:journals/tcs/CrochemoreIKRRW14}, to name a few applications.

Kolpakov and Kucherov also posed the so-called runs conjecture which states that there are at most $n$ runs in a string of length $n$.
A long line of work on the upper~\cite{DBLP:conf/mfcs/CrochemoreI07,DBLP:journals/jcss/CrochemoreI08,DBLP:journals/tcs/CrochemoreIT11,DBLP:conf/lata/Giraud08,DBLP:journals/tcs/PuglisiSS08,DBLP:conf/stacs/Rytter06,DBLP:journals/iandc/Rytter07} and lower bounds~\cite{DBLP:journals/ijfcs/FranekY08,DBLP:conf/stringology/MatsubaraKIBS08,DBLP:journals/ajc/Simpson10}
was concluded by Bannai et al.\ who positively resolved the runs conjecture in~\cite{runstheorem} (see also an alternative proof in~\cite{DBLP:journals/tcs/CrochemoreM16} and a tighter upper bound for binary strings from \cite{DBLP:conf/spire/0001HIL15}).

A square is a concatenation of two copies of the same string. Fraenkel and Simpson~\cite{DBLP:journals/jct/FraenkelS98} showed that a string of length $n$ contains at most $2n$ distinct square factors. This bound was improved in~\cite{DBLP:journals/dam/DezaFT15,DBLP:journals/tcs/Ilie07}. 
All distinct squares in a string of length $n$ can be computed in $\cO(n)$ time assuming an integer alphabet~\cite{DBLP:conf/cpm/BannaiIK17,DBLP:journals/tcs/CrochemoreIKRRW14,DBLP:journals/jcss/GusfieldS04} (see \cite{DBLP:journals/tcs/StoyeG02} for an earlier $\cO(n \log n)$ algorithm).

Pattern matching and combinatorics on 2D strings have been studied for more than 40 years, see e.g.~\cite{DBLP:journals/siamcomp/AmirBF94,DBLP:conf/soda/AmirF91,DBLP:journals/siamcomp/Baker78a,DBLP:journals/ipl/Bird77,DBLP:books/daglib/0020103,DBLP:books/daglib/0020111}.
In this paper we consider 2-dimensional versions of runs,  introduced by Amir et al.~\cite{DBLP:conf/esa/AmirLMS18,Amir2020}, and of repetitions in 2D-strings, introduced by Apostolico and Brimkov~\cite{DBLP:journals/tcs/ApostolicoB00}. As discussed in~\cite{Amir2020,DBLP:journals/dam/ApostolicoB05}, one could potentially exploit such repetitions in a 2D-string, which could for instance be an image, in order to compress it.

A \emph{2D-run} in a 2D-string $A$ is a subarray of $A$ that is both horizontally periodic and vertically periodic and that cannot be extended by a row or column without changing the horizontal or vertical periodicity (a formal definition follows in Section~\ref{sec:prelim}); see \cref{fig:example}(a). Amir et al.~\cite{DBLP:conf/esa/AmirLMS18,Amir2020} have shown that the maximum number of 2D-runs in an $n \times n$ array is $\cO(n^3)$ and presented an example with $\Theta(n^2)$ 2D-runs. In \cite{Amir2020} they presented an $\cO(n^2 \log n + \textsf{output})$-time algorithm for computing 2D-runs.

A \emph{quartic} is a configuration that is composed of $2 \times 2$ occurrences of an array $W$ (see \cref{fig:example}(b)) and a \emph{tandem} is a configuration consisting of two occurrences of an array $W$ that share 
one side (Apostolico and Brimkov~\cite{DBLP:journals/tcs/ApostolicoB00} also considered another type of tandems, which share one corner; see also \cite{DBLP:conf/cpm/AmirBLMS20}). An array $W$ is called \emph{primitive} if it cannot be partitioned into non-overlapping replicas of some array $W'$.
Apostolico and Brimkov~\cite{DBLP:journals/tcs/ApostolicoB00} considered only quartics and tandems with primitive $W$ (we call them \emph{primitively rooted}) and showed tight asymptotic bounds $\Theta(n^2 \log^2 n)$ and $\Theta(n^3 \log n)$ for the maximum number of occurrences of such quartics and tandems in an $n \times n$ array, respectively. In \cite{DBLP:journals/dam/ApostolicoB05} they presented an optimal $\cO(n^3 \log n)$-time algorithm for computing all {\it occurrences} of tandems with primitive $W$. This extends a result that a 1D-string of length $n$ contains $\cO(n \log n)$ occurrences of primitively rooted squares and they can all be computed in $\cO(n \log n)$ time; see~\cite{DBLP:journals/ipl/Crochemore81,DBLP:journals/tcs/StoyeG02}.
In this paper we consider the  numbers of {\it all distinct}  quartics, which is a more complicated problem.

\begin{figure}[htpb]
    \centering
\begin{tikzpicture}[scale=0.45]
\begin{scope}
\foreach \x/\y/\c in {
0/-1/a,1/-1/a,2/-1/a,3/-1/a,4/-1/a,5/-1/a,6/-1/a,7/-1/a,
0/0/b,1/0/a,2/0/b,3/0/a,4/0/b,5/0/a,6/0/b,7/0/a,
0/1/a,1/1/a,2/1/a,3/1/a,4/1/a,5/1/a,6/1/a,7/1/a,
0/2/a,1/2/a,2/2/b,3/2/a,4/2/b,5/2/a,6/2/b,7/2/a,
0/3/a,1/3/b,2/3/a,3/3/a,4/3/a,5/3/b,6/3/a,7/3/b
}{
    \draw(\x,\y) node[above] {$\c$};
}
\draw[xshift=-0.5cm,yshift=-0.1cm,] (-0.2,-1.2) rectangle (8.2,4.2);
\draw[xshift=-0.5cm,brown,line width=0.4mm] (1,-1) rectangle (8,3);
\draw (3.5,-2) node {\textbf{(a)} a 2D-run};
\end{scope}
\begin{scope}[xshift=12cm]
\foreach \x/\y/\c in {
0/-1/a,1/-1/a,2/-1/a,3/-1/a,4/-1/a,5/-1/a,6/-1/a,7/-1/a,
0/0/b,1/0/a,2/0/b,3/0/a,4/0/b,5/0/a,6/0/b,7/0/a,
0/1/a,1/1/a,2/1/a,3/1/a,4/1/a,5/1/a,6/1/a,7/1/a,
0/2/a,1/2/a,2/2/b,3/2/a,4/2/b,5/2/a,6/2/b,7/2/a,
0/3/a,1/3/b,2/3/a,3/3/a,4/3/a,5/3/b,6/3/a,7/3/b
}{
    \draw(\x,\y) node[above] {$\c$};
}
\draw[xshift=-0.5cm,yshift=-0.1cm,] (-0.2,-1.2) rectangle (8.2,4.2);
\draw[xshift=-0.5cm,blue,line width=0.3mm] (1,-1) rectangle (3,1);
\draw[xshift=-0.5cm,blue,line width=0.3mm] (1,1) rectangle (3,3);
\draw[xshift=-0.5cm,blue,line width=0.3mm] (3,-1) rectangle (5,1);
\draw[xshift=-0.5cm,blue,line width=0.3mm] (3,1) rectangle (5,3);
\draw (3.5,-2) node {\textbf{(b)} a quartic};

\end{scope}
\end{tikzpicture}
    \caption{Examples of a 2D-run and a quartic.}
    \label{fig:example}
\end{figure}
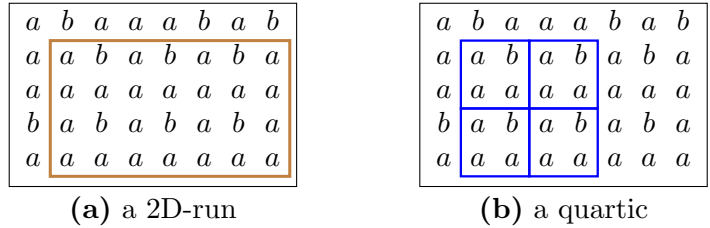

When computing 2D-runs we consider positioned runs: two 2D-runs with same content but
starting in different points are considered distinct. However in case of quartics,
similarly as in case of 1D-squares, we consider unpositioned
quartics; if two quartics have the same content but start in different positions, we consider them 
equal. 

\paragraph{Our Results:}
\begin{itemize}
    \item We show that the number of 2D-runs in an $n \times n$ array is $\cO(n^2 \log^2 n)$. This improves upon the $\cO(n^3)$ upper bound of Amir et al.~\cite{DBLP:conf/esa/AmirLMS18,Amir2020} and proves that their algorithm computes all 2D-runs in an $n \times n$ 2D-string in $\cO(n^2 \log^2 n)$ time {\bf (Section~\ref{sec:runs})}.
    \item We show that the number of distinct quartics in an $n \times n$ array is $\cO(n^2 \log^2 n)$.
    This can be viewed as an extension of the bounds on the maximum number of distinct square factors in a 1D-string~\cite{DBLP:journals/dam/DezaFT15,DBLP:journals/jct/FraenkelS98} {\bf (Section~\ref{sec:quartics})}.
    \item We present algorithmic implications of the new upper bound for 2D-runs. We show that all occurrences of primitively rooted quartics can be computed in quasi-linear, $\cO(n^2 \log^2 n)$ time, which is optimal by the bound
    of Apostolico and Brimkov~\cite{DBLP:journals/tcs/ApostolicoB00}.
    Thus our algorithm complements the result of Apostolico and Brimkov~\cite{DBLP:journals/dam/ApostolicoB05} who gave an optimal algorithm for computing all occurrences of primitively rooted tandems.
    We also show that all distinct quartics can be computed in quasi-linear, $\cO(n^2 \log^2 n)$ time, which extends efficient computation of distinct squares in 1D-strings~\cite{DBLP:conf/cpm/BannaiIK17,DBLP:journals/tcs/CrochemoreIKRRW14,DBLP:journals/jcss/GusfieldS04} to 2D {\bf (Section~\ref{sec:apps})}.
\item 
As an easy side result, we show tight $\Theta(n^3)$ bounds for the maximum number of distinct tandems in an $n \times n$ array 
and how to report  them in $\cO(n^3)$ time {\bf (Section~\ref{sec:prelim})}.
\end {itemize}

\section{Preliminaries}\label{sec:prelim}

\paragraph{1D-Strings.}
We denote by $[a,b]$ the set $\{i \in \mathbb{Z}: a \leq i\leq b\}$.
Let $S=S[1]S[2]\cdots S[|S|]$ be a \textit{string} of length $|S|$ over an alphabet $\Sigma$. The elements of $\Sigma$ are called \textit{letters}.
For two positions $i$ and $j$ on $S$, we denote by $S[i\dd j]=S[i]\cdots S[j]$ the \textit{fragment} of $S$ that starts at position $i$ and ends at position $j$ (it equals $\varepsilon$ if $j<i$).
A positive integer $p$ is called a \emph{period} of $S$ if $S[i] = S[i + p]$ for all $i = 1, \ldots, |S| - p$.  We refer to the smallest period as \emph{the period} of the string, and denote it by $\per(S)$.

\begin{lemma}[Periodicity Lemma (weak version), Fine and Wilf~\cite{fine1965uniqueness}]\label{lem:FW}
If $p$ and $q$ are periods of a string $S$ and satisfy $p + q \leq |S|$, then $\gcd(p, q)$ is also a period of $S$.
\end{lemma}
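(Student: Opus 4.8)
The plan is to prove this by the classical Euclidean-algorithm argument, carried out as strong induction on the quantity $p+q$. Without loss of generality I would assume $p \ge q$. If $p = q$ then $\gcd(p,q) = p = q$ is itself a period of $S$ and there is nothing to prove, which handles the base case. For the inductive step I would assume $p > q$ and establish, as the key intermediate claim, that $p-q$ is also a period of $S$. Granting this, I observe that $\gcd(q,\, p-q) = \gcd(p,q)$ and that $q + (p-q) = p \le p+q \le |S|$, so the pair $(q,\, p-q)$ again satisfies the hypothesis of the lemma while having strictly smaller sum $p < p+q$. The induction hypothesis applied to this pair then yields that $\gcd(p,q)$ is a period of $S$, completing the proof.

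The heart of the argument is therefore the claim that $r := p-q$ is a period, i.e.\ that $S[i] = S[i+r]$ for every $i$ with $1 \le i \le |S| - r$. I would prove this by composing one application of the period $p$ with one application of the period $q$, splitting into two cases according to whether a backward step of length $q$ from position $i$ stays inside the string. If $i > q$, I step backwards by $q$ (using $q$ as a period) to get $S[i] = S[i-q]$, and then forwards by $p$ to get $S[i-q] = S[i-q+p] = S[i+r]$; the condition $i \le |S|-r$ is exactly what keeps the forward step in range. If instead $i \le q$, I step forwards by $p$ first, obtaining $S[i] = S[i+p]$, and then backwards by $q$ to get $S[i+p] = S[i+p-q] = S[i+r]$. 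In either case $S[i] = S[i+r]$, proving the claim.

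The one genuinely delicate point --- and the only place where the hypothesis $p+q \le |S|$ is actually used --- is the index bookkeeping: in each case I must check that every position to which a period is applied lies in $\{1,\dots,|S|\}$. The binding constraint arises in the case $i \le q$, where legality of the forward $p$-step requires $i+p \le |S|$; since $i \le q$ this is guaranteed precisely by $p+q \le |S|$. This is where I expect the main (though modest) care to be needed, and it is essential: the hypothesis cannot simply be dropped, as the string $S = \texttt{abaaba}$ shows, having periods $p=5$ and $q=3$ with $p+q = 8 > 6 = |S|$ yet failing to have period $\gcd(5,3)=1$. Once these range checks are discharged the induction closes immediately.
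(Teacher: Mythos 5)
Your proof is correct. There is nothing in the paper to compare it against: the paper states this lemma as a known classical result, citing Fine and Wilf~\cite{fine1965uniqueness}, and gives no proof of its own, so your argument stands on its own as the standard Euclidean-algorithm proof of the weak Periodicity Lemma. The details all check out: in the case $i > q$ the range condition $i \le |S| - r$ is exactly what legitimizes the forward $p$-step; in the case $i \le q$ the hypothesis $p + q \le |S|$ is what legitimizes it; the reduced pair $(q,\,p-q)$ consists of positive integers that are both periods, with sum $p \le |S|$ and strictly smaller than $p+q$, so the strong induction closes. Your example $S = \texttt{abaaba}$ (periods $5$ and $3$, but not $1$) correctly witnesses that the hypothesis cannot be dropped.
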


A string $S$ is called \emph{periodic} if $\per(S)\leq |S|/2$.
By $ST$ and $S^k$ we denote the concatenation of strings $S$ and $T$ and $k$ copies of the string $S$, respectively.
A string $S$ is called \emph{primitive} if it cannot be expressed as $U^k$ for a string $U$ and an integer $k>1$.

A string of the form $U^2$ for string $U$ is called a \emph{square}. A square $U^2$ is called \emph{primitively rooted} if $U$ is primitive. We will make use of the following important property of squares.

\begin{lemma}[Three Squares Lemma, \cite{DBLP:journals/algorithmica/CrochemoreR95}]\label{lem:3sq}
Let $U$, $V$ and $W$ be three strings such that $U^2$ is a proper prefix of $V^2$, $V^2$ is a proper prefix of $W^2$ and $U$ is primitive. Then $|U|+|V|\leq|W|$.
\end{lemma}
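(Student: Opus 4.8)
The plan is to suppose $|U|+|V|>|W|$ and then squeeze so much periodicity out of the two nested squares that $U$ is forced to be a proper power, contradicting primitivity. Write $u=|U|$, $v=|V|$, $w=|W|$. From the proper inclusions $U^2\sqsubset V^2\sqsubset W^2$ we get $u<v<w$, that $U$ is a prefix of $V$, and that both $U$ and $V$ are prefixes of $W$. If $2v\le w$ there is nothing to prove, since then $u+v<2v\le w$; so I may assume $v<w<u+v$, and in particular $w<2v$.

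First I would extract a short period of $U$. Writing $W=VX$ with $|X|=w-v=:s$ (legitimate since $V$ is a prefix of $W$), the prefix $V^2=VV$ of $W^2=VXVX$ must, because $w<2v$, agree with $VX\cdot V[1\dd v-s]$; comparing the two gives $V=X\cdot V[1\dd v-s]$, i.e.\ $V$ has period $s$. As $s=w-v<u$ (this is precisely the assumption $w<u+v$) and $U$ is a prefix of $V$ longer than $s$, the period propagates: $U$ has period $s=w-v<u$. Equivalently, the occurrences of $U$ at positions $v+1$ and $w+1$ of $W^2$ sit at distance $w-v<u$, so they overlap and impose period $s$ on $U$.

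Next I would produce a second small period and invoke the Periodicity Lemma (\cref{lem:FW}). If $2u\le v$, then $U^2$ is a prefix of $V$ and therefore also has period $s$; since $U^2$ trivially has period $u$ and $s+u\le 2u=|U^2|$, \cref{lem:FW} yields the period $d=\gcd(s,u)$ of $U^2$, hence of $U$. As $d\le s<u$ and $d\mid u$, we obtain $U=(U[1\dd d])^{u/d}$ with $u/d\ge 2$, contradicting primitivity. If instead $v<2u$, the same overlap argument applied one level down to $U^2\sqsubset V^2$ endows $U$ with the further period $v-u<u$; combined with period $s=w-v$, whenever $(w-v)+(v-u)=w-u\le u$ the Periodicity Lemma again forces a period of $U$ dividing $u$, and primitivity is violated.

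The hard part will be the remaining ``tight'' regime $v<2u$ together with $w>2u$ (so that $w-u>u$ and the weak bound of \cref{lem:FW} no longer applies) --- concretely the length triples such as $(u,v,w)=(3,5,7)$, which are exactly the configurations the lemma rules out. Here I expect to replace the weak Periodicity Lemma either by its strong form, combined with the synchronization property of primitive words (a primitive $U$ has no occurrence in $U^2$ other than as its prefix and its suffix), so as to contradict the extra overlapping occurrence of $U$ that the assumed periods would create; or by an induction in which the tight regime descends --- in the style of one step of the Euclidean algorithm on the periods --- to a strictly smaller instance with $U$ still primitive, letting the inductive hypothesis close the argument.
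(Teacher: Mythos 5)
A preliminary remark: the paper does not prove this lemma at all --- it is imported as a known result of Crochemore and Rytter~\cite{DBLP:journals/algorithmica/CrochemoreR95} --- so your attempt can only be measured against a correct proof, not against anything in the paper's text. Your reduction and first case are sound: assuming $w<u+v$ (with $u=|U|$, $v=|V|$, $w=|W|$), the derivation that $V$, and hence its prefix $U$, has period $s=w-v<u$ is correct, and so is the case $2u\le v$, where $U^2$ is a prefix of $V$ and \cref{lem:FW} applied to the periods $u$ and $s$ of $U^2$ yields the period $\gcd(u,s)$, which divides $u$ and is smaller than $u$, contradicting primitivity. Note that this case closes only because one of the two periods involved is $u$ itself, so the gcd automatically divides $|U|$.

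That is exactly what fails in the case $v<2u$. There $U$ acquires the periods $s=w-v$ and $t=v-u$, and you claim that when $s+t=w-u\le u$ the Periodicity Lemma ``forces a period of $U$ dividing $u$''. It does not: it gives the period $\gcd(s,t)$, which divides $w-u$ but need not divide $u$, and a short period that does not divide the length is perfectly compatible with primitivity. Concretely, take $(u,v,w)=(5,7,9)$ and $U=ababa$: then $s=t=2$, $s+t\le u$, $U$ does have period $2$, and yet $U$ is primitive --- so the facts you have collected about $U$ alone can never produce the desired contradiction. What actually rules out this configuration is information you derive but never use: from $U^2\sqsubset V^2$ and $v<2u$ one gets $V=U\cdot U[1\dd t]$, and this string must itself carry the period $s$; for $U=ababa$ this would force $ababaab$ to have period $2$, which it does not. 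Any correct treatment of this case must therefore play the period $s$ of the longer string $V$ against the structure $V=U\cdot U[1\dd t]$ and the primitivity of $U$ (for instance via the synchronization property of primitive words), rather than merely accumulate periods of $U$ and apply \cref{lem:FW}. Finally, the regime $v<2u$, $w>2u$, which you yourself flag as ``the hard part'', is left entirely unproven: you only express the expectation that the strong periodicity lemma, synchronization, or an induction will close it. Since that regime, together with the flawed step above, is where the whole content of the lemma sits, the proposal as it stands is not a proof.
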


A \emph{run} (also known as \emph{maximal repetition}) in $S$ is a periodic fragment $R=S[i\dd j]$ which cannot be extended either to the left or to the right without increasing the period $p= \per(R)$, i.e.~if $i>1$ then $S[i-1]\neq S[i +p-1]$ and if $j<|S|$ then $S[j+1] \neq S[j-p+1]$.
Let $\mathcal{R}(S)$ denote the set of all runs of string $S$.
For periodic fragment $U=S[a\dd b]$, the run that extends $U$ is the unique run $R=S[i \dd j]$ such that $i\leq a \leq b \leq j$ and $\per(R)=\per(U)$. An occurrence of a square $U^2$ is said to be \emph{induced} by a run $R$ if $R$ extends $U^2$. Every square is induced by exactly one run~\cite{DBLP:journals/tcs/CrochemoreIKRRW14}.

\paragraph{2D-Strings.}

Let $A$ be an $m \times n$ array (2D-string).
We denote the height and width of $A$ by $\height(A)=m$ and $\width(A)=n$, respectively.
By $A[i,j]$ we denote the cell in the $i$th row and $j$th column of $A$; see \cref{fig:power}(a). By $A[i_1 \dd i_2,j_1 \dd j_2]$ we denote the subarray formed of rows $i_1,\ldots,i_2$ and columns $j_1,\ldots,j_2$.

A positive integer $p$ is a \emph{horizontal period} of $A$ if the $i$-th column of $A$ equals the $(i+p)$-th column of $A$ for all $i=1,\ldots,n-p$. We denote the smallest horizontal period of $A$ by $\hper(A)$.
Similarly, a positive integer $q$ is a \emph{vertical period} of $A$ if the $i$-th row of $A$ equals the $(i+q)$-th row of $A$ for all $i=1,\ldots,m-q$; the smallest vertical period of $A$ is denoted by $\vper(A)$. 

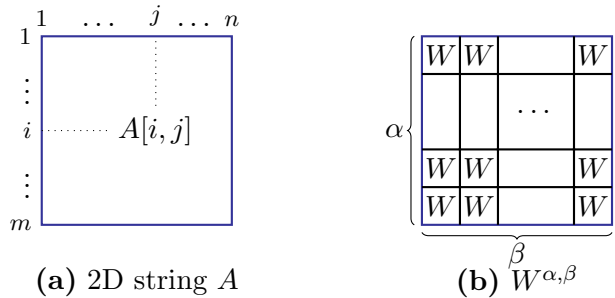
\begin{figure}[htpb]
    \centering
    \begin{tikzpicture}[scale=0.5]
    \begin{scope}[xshift=-10cm]
    \draw[thick,darkblue] (0,0) rectangle (5,5);
    \draw (0,5) node[left] {{\footnotesize 1}};
    \draw (0,3.75) node[left] {$\vdots$};
    \draw (0,2.5) node[left] {{\footnotesize $i$}};
    \draw (0,1.25) node[left] {$\vdots$};
    \draw (0,0) node[left] {{\footnotesize $m$}};
    \draw (5,5) node[above] {{\footnotesize $n$}};
    \draw (4,5) node[above] {$\ldots$};
    \draw (3,5) node[above] {{\footnotesize $j$}};
    \draw (1.5,5) node[above] {$\ldots$};
    \draw (0,5) node[above] {{\footnotesize 1}};
    \draw (3,2.5) node (A) {$A[i,j]$};
    \draw[dotted] (0,2.5) -- (A) -- (3,5);
    \draw (2.5,-1.5) node {\textbf{(a)} 2D string $A$};
    \end{scope}
    \draw[thick,darkblue] (0,0) rectangle (5,5);
    \foreach \x in {0,1,4}
      \foreach \y in {0,1,4}
        \draw[xshift=0.5cm,yshift=0.5cm] (\x,\y) node {$W$};
    \draw[thick] (1,0) -- (1,5) (2,0) -- (2,5) (4,0) -- (4,5) (0,1) -- (5,1) (0,2) -- (5,2) (0,4) -- (5,4);
    \draw (3,3) node {\ldots};
    \draw[snake=brace,yshift=-0.2cm] (5,0) -- node[below] {$\beta$} (0,0);
    \draw[snake=brace,xshift=-0.2cm] (0,0) -- node[left] {$\alpha$} (0,5);
    \draw (2.5,-1.5) node {\textbf{(b)} $W^{\alpha,\beta}$};
    \end{tikzpicture}
    \caption{A 2D-string and the structure of $W^{\alpha,\beta}$.}
    \label{fig:power}
\end{figure}

An $r \times c$ subarray $B=A[i_1 \dd i_2, j_1 \dd j_2]$ of $A$ is a \emph{2D-run} if $\hper(B)\leq c/2$, $\vper(B)\leq r/2$ and extending $B$ by a row or column, i.e.~either of $A[i_1-1, j_1 \dd j_2]$, $A[i_2+1, j_1 \dd j_2]$, $A[i_1 \dd i_2, j_1-1]$, or $A[i_1 \dd i_2, j_2+1]$, would result in a change of the smallest vertical or the horizontal period.

 If $W$ is a 2D array, then by $W^{\alpha,\beta}$ we denote an array that is composed of $\alpha \times \beta$ copies of $W$; see \cref{fig:power}(b). 
 A \emph{tandem} of $W$ is an array of the form $W^{1,2}$ and a \emph{quartic} of $W$ is the array $W^{2,2}$.

A 2D array $A$ is called \emph{primitive} if $A=B^{\alpha,\beta}$ for positive integers $\alpha,\beta$ implies that $\alpha=\beta=1$. The \emph{primitive root} of an array $A$ is the unique primitive array $B$ for which $A=B^{\alpha,\beta}$ for $\alpha,\beta \ge 1$.

Apostolico and Brimkov~\cite{DBLP:journals/tcs/ApostolicoB00} proved the following upper bound, and showed that it is tight by giving a corresponding lower bound.
\begin{fact}[Lemma 5 in~\cite{DBLP:journals/tcs/ApostolicoB00}]\label{fct:ABr}
A 2D array of size $n \times n$ has $\cO(n^2 \log^2 n)$ occurrences of primitively rooted quartics.
\end{fact}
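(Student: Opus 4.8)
The plan is to fix the top-left corner of a quartic occurrence and show that at most $\cO(\log^2 n)$ primitively rooted quartics can share that corner; summing over the at most $n^2$ corners then yields the bound. The first step is to translate primitivity of the block into a statement about periods of the quartic itself. Writing $Q=W^{2,2}$ for the quartic with $W$ of size $r\times c$, I would prove the characterisation that $W$ is primitive if and only if $\hper(Q)=c$ and $\vper(Q)=r$. The direction that a nontrivial decomposition $W=B^{\alpha,\beta}$ forces a strictly smaller horizontal or vertical period is immediate; for the converse, if, say, $\hper(Q)=c'<c$, then since $c'+c\le 2c=\width(Q)$, \cref{lem:FW} shows that $\gcd(c',c)$ is a horizontal period of $Q$ that divides $c$ and is smaller than $c$, so $W$ becomes a nontrivial horizontal power, contradicting primitivity. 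This reduces counting primitively rooted quartics to counting subarrays whose two smallest periods are exactly half of the respective side lengths.

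With this in hand, the second step handles one dimension cleanly by reducing to the one-dimensional theory of squares. For a fixed corner $(i,j)$ and a fixed width $2c$, I would read the strip $A[i\dd n,\, j\dd j+2c-1]$ as a one-dimensional string over the \emph{super-alphabet} whose letters are whole length-$2c$ rows. A quartic of width $2c$ and height $2r$ at corner $(i,j)$ requires, in particular, that this string has a primitively rooted square of root-length $r$ starting at position $i$: the square structure encodes the condition $\vper(Q)=r$, and root-primitivity over the super-alphabet is precisely the statement that $r$ is the \emph{smallest} vertical period, i.e.\ that the half-length equals the smallest period. Since the Three Squares Lemma (\cref{lem:3sq}) implies that any fixed position of a one-dimensional string is the start of only $\cO(\log n)$ primitively rooted squares (three nested such squares with primitive shortest root force the lengths to at least double every two steps), there are $\cO(\log n)$ admissible heights for each fixed $(i,j,c)$. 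Requiring additionally that each participating row be a horizontal square and that $\hper(Q)=c$ only removes candidates, so the bound is unaffected.

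It remains to bound, for a fixed corner $(i,j)$, the number of \emph{distinct widths} $c$ that can occur by $\cO(\log n)$; combined with the previous paragraph this gives the desired $\cO(\log^2 n)$ per corner. I expect this to be the main obstacle. The naive attempt — projecting onto the single top row $A[i,\, j\dd n]$ and invoking the one-dimensional bound — fails, because the horizontal period of a quartic is a property of the whole $2r\times 2c$ band and does \emph{not} localise to a single row: an individual row may have a much smaller period than $\hper(Q)$, and a fixed row can start $\Omega(n)$ (non-primitively rooted) squares. To overcome this I would establish a genuinely two-dimensional analogue of the Three Squares Lemma for quartics sharing a corner, showing that among three such quartics nested in the product order (both side lengths nondecreasing) either the widths or the heights must grow additively, so that a width cannot recur more than $\cO(\log n)$ times before doubling. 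The technical heart is controlling the interaction of the two periods by applying \cref{lem:FW} in both directions simultaneously, exploiting that the shortest period equals exactly half the side length in each direction. Once the number of widths per corner is pinned to $\cO(\log n)$, multiplying by the $\cO(\log n)$ heights per width and summing over the $n^2$ corners completes the proof.
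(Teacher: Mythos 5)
First, a point of reference: the paper never proves this statement itself --- it imports it as \cref{fct:ABr}, citing Lemma~5 of Apostolico and Brimkov, and builds on it as a black box. So your attempt is a reproof of a quoted result, which is fine in principle. Your first two steps are correct. The characterisation ``$W$ primitive $\Leftrightarrow$ $\hper(W^{2,2})=\width(W)$ and $\vper(W^{2,2})=\height(W)$'' is sound (the converse direction via \cref{lem:FW} applied to the meta-string of columns is exactly right), and so is the super-alphabet reduction: for a fixed corner and fixed width $2c$, each primitively rooted quartic yields a primitively rooted square over the row-alphabet starting at a fixed position of a fixed string, and \cref{lem:3sq} caps the number of such squares at $\cO(\log n)$, hence $\cO(\log n)$ admissible heights.

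The third step, however, is where the entire difficulty of the Apostolico--Brimkov lemma lives, and you do not prove it --- you announce a ``genuinely two-dimensional analogue of the Three Squares Lemma'' that you \emph{would} establish. Worse, the lemma as you describe it (growth along triples nested \emph{in the product order}) would not close the gap even if proven: two primitively rooted quartics sharing a corner need not be comparable in the product order, since one can be wider but shorter than the other. Your per-corner set of root dimensions is a poset, and bounding chains says nothing about antichains; a family with strictly increasing widths and strictly decreasing heights is untouched by the proposed lemma, yet each fixed width still has $\cO(\log n)$ heights and each fixed height $\cO(\log n)$ widths (a matching-like antichain has both ``degrees'' equal to $1$), so the degree bounds alone cannot rule out $\omega(\log^2 n)$ quartics at one corner. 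For such incomparable pairs the natural tool fails concretely: the overlap of the two quartics is only $2\min(r,r')\times 2\min(c,c')$, which is too small for \cref{lem:FW} to combine the two horizontal (or the two vertical) periods --- this is precisely the obstruction that forces the paper, in its own same-corner arguments (\cref{lem:per2D}, \cref{cor:per2D}), to work with $W^{2,3}$ and $W^{3,3}$ powers rather than with quartics, and truncating the taller quartic to the height of the shorter one destroys primitivity of the root, so the 1D reduction cannot be rescued that way. Until you supply an argument controlling these incomparable configurations, the claimed $\cO(\log n)$ bound on distinct widths per corner --- and with it the whole theorem --- remains unproven.
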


We say that a quartic $Q=W^{2,2}$ is \emph{induced} by a 2D-run $R$ if $Q$ is a subarray of $R$ and $\hper(R)$ and $\vper(R)$ divide the width and height of $W$, respectively.

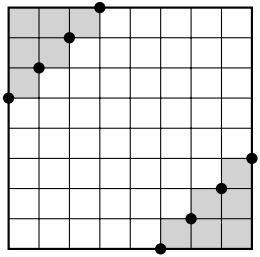
\begin{figure}[htpb]
    \begin{center}

\begin{tikzpicture}[scale=0.4]
\foreach \xa/\xb/\y in {0/2/7, 0/1/6, 0/0/5, 7/7/2, 6/7/1, 5/7/0} {
  \foreach \x in {\xa,...,\xb} {
    \filldraw[color=gray!35] (\x,\y) rectangle (\x+1,\y+1); 
  }
}

\draw[step=1.0,black] (0,0) grid (8, 8);
\draw[thick] (0,0) rectangle (8, 8);

\foreach \x/\y in {3/8, 2/7, 1/6, 0/5, 8/3, 7/2, 6/1, 5/0} {
  \node[circle, fill,inner sep=1.5pt] at (\x,\y) {};
}
\end{tikzpicture}
\end{center}
    \caption{Shaded positions contain letters $b$, all the other the letters $a$. Each rectangle with top-left and bottom-right corners marked is a 2D-run;
    altogether there are 18 distinct 2D-runs, including two of the form $b^{2,2}$. There are also 10 distinct quartics $a^{\alpha,\beta}$,
    where $0<\alpha,\beta\le 8$ are even and $\alpha+\beta\le 10$. 
    There is also the quartic $b^{2,2}$ (altogether 11 distinct quartics). The centrally
    placed quartic $a^{2,2}$ is contained in 16 2D-runs.
    There are only two distinct primitively rooted quartics.}
    \label{fig:walen}
\end{figure}

\begin{observation}\label{obs:ind_qu}
Every quartic is induced by a 2D-run. However; the same quartic can be induced even by $\Theta(n^2)$ 2D-runs; say the middle quartic $a^{2,2}$ in \cref{fig:walen}.
\end{observation}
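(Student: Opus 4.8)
The plan is to prove the two assertions separately. For the first, write $Q=W^{2,2}$, let $w=\width(W)$, $h=\height(W)$, and set $p=\hper(Q)$, $q=\vper(Q)$. I would first record that $p\mid w$ and $q\mid h$. Indeed, viewing the columns of $Q$ as the letters of a $1$D-string of length $2w$, both $p$ and $w$ are periods of this string (the latter because $Q$ consists of two horizontal copies of $W$); since $p+w\le 2w$, \cref{lem:FW} shows that $\gcd(p,w)$ is a period, and the minimality of $p$ forces $p=\gcd(p,w)$, i.e.\ $p\mid w$. The symmetric argument on the rows of $Q$ gives $q\mid h$. Consequently it suffices to produce a $2$D-run $R\supseteq Q$ with $\hper(R)=p$ and $\vper(R)=q$: such an $R$ automatically induces $Q$.

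To build $R$ I would extend $Q$ in two phases. First, keeping the rows of $Q$ fixed, I extend it to the left and right as far as the value $p$ remains a horizontal period, obtaining a subarray $R_1$; this is exactly the $1$D run (over the alphabet of column vectors) that extends the horizontally periodic fragment $Q$, so $\hper(R_1)=p$ and no further horizontal extension preserves $p$. Since this extension copies columns $p$-periodically, $q$ remains a vertical period of $R_1$, and restricting any vertical period of $R_1$ to the columns of $Q$ shows it cannot be smaller; hence $\vper(R_1)=q$. Second, keeping the columns of $R_1$ fixed, I extend it up and down as far as $q$ remains a vertical period, obtaining $R$. Each newly added row is, by $q$-periodicity, a copy of a row of $R_1$, and every row of $R_1$ already has horizontal period $p$ because $\hper(R_1)=p$; hence $p$ stays a horizontal period of $R$ and, restricting to the rows of $R_1$, $\hper(R)=p$. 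The periodicity requirements of a $2$D-run hold since $p\le w\le \width(R)/2$ and $q\le h\le \height(R)/2$. For maximality, adding a row destroys $q$ by construction and hence changes $\vper$, while adding a column already destroys $p$ on the rows of $R_1\subseteq R$ (by maximality of $R_1$) and hence changes $\hper$. The point that needs care — and which I expect to be the main obstacle — is precisely this last interaction: I must check that the horizontal maximality established for $R_1$ survives the subsequent vertical extension (it does, because enlarging the row set only adds constraints to a horizontal extension) and that both smallest periods are preserved exactly, not merely kept as periods.

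For the second assertion I would exhibit an $n\times n$ family generalizing \cref{fig:walen} over the alphabet $\{a,b\}$: place the central block $Q=a^{2,2}$ in the middle and put letters $b$ only inside two triangular staircases, one in the top-left and one in the bottom-right corner. Since $\width(W)=\height(W)=1$ here, a $2$D-run $R$ induces $Q$ exactly when $\hper(R)=\vper(R)=1$, i.e.\ when $R$ is a maximal all-$a$ rectangle of dimensions at least $2\times 2$ containing $Q$. Such a rectangle is determined by its top-left and bottom-right corners; its top-left corner can be pushed against any of the $\Theta(n)$ steps of the top-left staircase and, independently, its bottom-right corner against any of the $\Theta(n)$ steps of the bottom-right staircase, in each case still containing the central block and avoiding every $b$. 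Counting these independent choices yields $\Theta(n)\cdot\Theta(n)=\Theta(n^2)$ distinct $2$D-runs all inducing the single quartic $Q$, as claimed; the only remaining work here is to verify the counting and that each chosen rectangle is genuinely maximal.
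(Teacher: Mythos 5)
Your proposal is correct and follows essentially the same route as the paper, which states this as an observation without a detailed proof: the first part is the natural extension argument (extend the quartic horizontally to a run over column metacharacters, then vertically, checking that both minimal periods and maximality survive, and that $\hper(Q)\mid\width(W)$, $\vper(Q)\mid\height(W)$ via \cref{lem:FW}), and the second part is precisely the staircase construction of \cref{fig:walen} generalized from $8\times 8$ to $n\times n$, where the $\Theta(n)\cdot\Theta(n)$ independent choices of top-left and bottom-right corners along the two staircases give $\Theta(n^2)$ maximal all-$a$ rectangles inducing the central quartic $a^{2,2}$. Your write-up just makes explicit the details the paper leaves implicit.
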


\begin{remark}
The fact that a string of length $n$ has $\cO(n \log n)$ occurrences of primitively rooted squares immediately shows (by the fact that a square is induced by exactly one run) that it has $\cO(n \log n)$ runs. However, an analogous argument applied for quartics and 2D-runs does not give a non-trivial upper bound for the number of the latter because of Observation~\ref{obs:ind_qu}.
\end{remark} 

In our algorithms, we use a variant of the Dictionary of Basic Factors in 2D (2D-DBF in short) that is similar to the one presented in~\cite{DBLP:books/daglib/0020111}. Namely, to each subarray of $A$ whose width and height is an integer power of 2 we assign an integer identifier from $[0, n^2]$ so that two arrays with the same dimensions are equal if and only if their identifiers are equal. The total number of such subarrays is $\cO(n^2 \log^2 n)$ and the identifiers can be assigned in $\cO(n^2 \log^2 n)$ time; see~\cite{DBLP:books/daglib/0020111}. Using 2D-DBF, we can assign an identifier to a subarray of $A$ of arbitrary dimensions $r \times c$ being a quadruple of 2D-DBF identifiers of its four $2^i \times 2^j$ subarrays that share one of its corners, where $2^i \le r < 2^{i+1}$ and $2^j \le c < 2^{j+1}$. Such quadruples preserve the property that two subarrays of the same dimensions are equal if and only if the 2D-DBF quadruples are the same.

As an illustration, we show a tight bound for the number of distinct tandems and an optimal algorithm for computing them.
\begin{theorem}\label{lem:n3_tan}
The maximum number of distinct tandems in an $n \times n$ array $A$ is $\Theta(n^3)$. All distinct tandems in an $n \times n$ array can be reported in the optimal $\Theta(n^3)$ time.
\end{theorem}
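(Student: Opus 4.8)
The plan is to reduce both the counting bound and the reporting algorithm to the one-dimensional theory of squares, by reading each horizontal band of $A$ as a single $1$D-string over an alphabet of columns. Recall that a tandem is an array of the form $W^{1,2}$, i.e.\ two horizontally adjacent copies of $W$ (vertical tandems $W^{2,1}$ are symmetric). I would fix a height $r$ and a top row $i$ and define the $1$D-string $R_{i,r}$ of length $n$ whose $j$-th letter is the column vector $A[i\dd i+r-1,\,j]$. Then $A[i\dd i+r-1,\,j\dd j+2w-1]$ is a tandem of height $r$ if and only if $R_{i,r}[j\dd j+2w-1]$ is a square of root length $w$, and two such tandems have equal content exactly when the corresponding squares of $R_{i,r}$ are equal. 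Thus there is a content-preserving correspondence between tandems of height $r$ with top row $i$ and squares in $R_{i,r}$.

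For the upper bound, every distinct tandem of height $r$ has a topmost occurrence, so it appears as a square in $R_{i,r}$ for some $i$. By the Fraenkel--Simpson bound~\cite{DBLP:journals/jct/FraenkelS98}, $R_{i,r}$ has at most $2n$ distinct squares. Summing over the $\cO(n)$ start rows and the $\cO(n)$ heights gives at most $\sum_{r}\sum_{i} 2n = \cO(n^3)$ distinct tandems; this overcounts, since a tandem may occur in many bands, but that only strengthens the inequality. Hence there are $\cO(n^3)$ distinct tandems.

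For the matching lower bound I would take $A$ with constant rows: let $c_1,\dots,c_n$ be pairwise distinct letters and set $A[i,j]=c_i$ for every $j$ (a binary alphabet also suffices, taking $c=c_1\cdots c_n$ to be any binary string with $\Theta(n^2)$ distinct factors). Each band of height $r$ then consists of $r$ constant rows, so every subarray $A[i\dd i+r-1,\,1\dd 2w]$ with $2w\le n$ is a tandem, and its content is determined precisely by the factor $c_i\cdots c_{i+r-1}$ together with the width $2w$. Since the letters are distinct, the number of distinct tandems equals (number of distinct factors of $c$)$\times\lfloor n/2\rfloor = \Theta(n^2)\cdot\Theta(n)=\Theta(n^3)$, matching the upper bound and giving $\Theta(n^3)$.

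For the algorithm, I would enumerate, for each of the $\cO(n^2)$ pairs $(i,r)$, the distinct squares of $R_{i,r}$ using a linear-time distinct-squares routine~\cite{DBLP:journals/tcs/CrochemoreIKRRW14,DBLP:conf/cpm/BannaiIK17}. To feed that routine a linearly sortable alphabet, I would use the 2D-DBF described above to attach an $\cO(1)$-size identifier to each column segment $A[i\dd i+r-1,\,j]$ and radix-sort the $n$ of them into ranks in $[1,n]$, all in $\cO(n)$ time per pair; each reported square then yields one candidate tandem occurrence, giving $\cO(n)$ candidates per pair and $\cO(n^3)$ overall. Finally I would attach to each candidate its 2D-DBF identifier (a quadruple of integers in $[0,n^2]$), radix-sort the whole list, and output one occurrence (top-left corner and dimensions) per distinct tandem; the $\cO(n^2\log^2 n)$ DBF preprocessing is dominated, so the total time is $\cO(n^3)$, optimal by the lower bound (note reporting must be succinct, as the full contents would be larger). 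The combinatorial bounds are routine once the band-to-$1$D correspondence is in place; the step needing the most care is the algorithm's \emph{global} deduplication, since one tandem can be produced from $\Theta(n)$ different bands — this is exactly what the 2D-DBF identifiers together with the final radix sort resolve within the $\cO(n^3)$ budget.
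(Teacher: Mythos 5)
Your proposal is correct and follows essentially the same route as the paper: the same band-to-1D reduction giving the $\cO(n^3)$ upper bound via the Fraenkel--Simpson distinct-squares bound, the same constant-rows lower-bound construction, and the same algorithm that runs a linear-time distinct-squares routine on each (top row, height) meta-string of column identifiers and then deduplicates globally with 2D-DBF identifiers and radix sort. The only deviation is cosmetic: you derive the per-column letter identifiers from the 2D-DBF plus a per-band radix sort, whereas the paper reads them off a generalized suffix tree of the columns; both fit within the $\cO(n^3)$ budget.
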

\begin{proof}
Let us fix two row numbers $i<i'$ in $A$. Then, the number of distinct tandems with top row $i$ and bottom row $i'$ is $\cO(n)$ by the fact that a string of length $n$ contains $\cO(n)$ squares~\cite{DBLP:journals/dam/DezaFT15,DBLP:journals/jct/FraenkelS98}. Thus, in total there are $\cO(n^3)$ distinct tandems.
For the lower bound, let the $i$th row of $A$ be filled with occurrences of the letter $i$. Every subarray of $A$ of even width is a tandem.
For each distinct triplet of top and bottom rows and even width, we obtain a distinct tandem.

Let us proceed to the algorithm.
For a height $h \in [1,n]$, we assign integer identifiers from $[1,n^2]$ that preserve lexicographical comparison to all height-$h$ substrings of columns of $A$. They can be assigned using the generalized suffix tree~\cite{DBLP:books/daglib/0020103,DBLP:journals/algorithmica/Ukkonen95} of the columns of $A$ in $\cO(n^2 \log n)$ time.
Let $B_h$ be an array such that $B_h[i,j]$ stores the identifier of $A[i \dd i+h-1,j]$. To a subarray $W=A[i \dd i+h-1,j \dd j+w-1]$ we assign an \emph{identifier} $\iden(W)=B_h[i,j \dd j+w-1]$.
Then for any two subarrays $W$ and $W'$ of height $h$, $W=W'$ if and only if $\iden(W)=\iden(W')$.
For every height $h=1,\ldots,n$ and row $i$, we find all distinct squares in $B_h[i,1],\ldots,B_h[i,n]$ in $\cO(n)$ time~\cite{DBLP:conf/cpm/BannaiIK17,DBLP:journals/tcs/CrochemoreIKRRW14,DBLP:journals/jcss/GusfieldS04}. This corresponds to the set of distinct tandems with top row $i$ and bottom row $i+h-1$.
Finally, we assign identifiers from 2D-DBF of $A$ to each of the tandems and use radix sort to sort them and enumerate distinct tandems.
\end{proof}

\section{Improved Upper Bound for 2D-Runs}\label{sec:runs}

We introduce the framework that Amir et al.\ used for efficiently computing 2D-runs~\cite{DBLP:conf/esa/AmirLMS18,Amir2020}.

We say that a subarray $B=A[i_1 \dd i_2, j_1 \dd j_2]$ of $A$ is a \emph{horizontal run} if it is horizontally periodic (that is, $\hper(B)\leq \width(B)/2$) and extending $B$ by either of the columns  $A[i_1 \dd i_2, j_1-1]$ or $A[i_1 \dd i_2, j_2+1]$ would result in a change of the smallest horizontal period. 
(Note that $B$ does not have to be vertically periodic.)

For $k \in [1, \lfloor \log n \rfloor]$ and $i \in [1, n-2^k+1]$, let $H^k_i$ be the string obtained by replacing the columns of array $A[i \dd i+2^k-1, 1 \dd n]$ with metasymbols such that $H^k_i[j]=H^k_i[j']$ if and only if $A[i \dd i+2^k-1, j]=A[i \dd i+2^k-1, j']$. Notice that each such horizontal run of height $2^k$ corresponds to a run in some $H_i^k$.

The following lemma will enable us to ``anchor'' each 2D-run $R$ in the top-left or bottom-left corner of a horizontal run of ``similar'' height as $R$.
It was proved in~\cite{Amir2020}, but we provide a proof for completeness.

\begin{lemma}[Lemma 7 in \cite{Amir2020}]\label{lem:align}
Let $R$ be a 2D-run whose height is in the range $[2^k,2^{k+1})$.
Then there is a horizontal run $R'$ of height $2^k$ with  $\hper(R')=\hper(R)$ and $\width(R')\ge \width(R)$
such that top-left or bottom-left corners of $R$ and $R'$ coincide (see \cref{fig:hor_run}). 
\end{lemma}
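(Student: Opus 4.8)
The plan is to split $R$ into its top $2^k$ rows and its bottom $2^k$ rows and to argue that one of these two height-$2^k$ blocks extends horizontally to the wanted run $R'$. Write $R=A[i_1\dd i_2,j_1\dd j_2]$ with height $r=i_2-i_1+1\in[2^k,2^{k+1})$, width $c$, and put $p=\hper(R)\le c/2$ and $q=\vper(R)\le r/2$. The first thing I would record is the numerical chain $q\le r/2<2^k\le r$: each of the two height-$2^k$ blocks fits inside $R$ and is at least as tall as one full vertical period.

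The first main step is to show that both the top block $B_{\mathrm{top}}=A[i_1\dd i_1+2^k-1,\,j_1\dd j_2]$ and the bottom block $B_{\mathrm{bot}}=A[i_2-2^k+1\dd i_2,\,j_1\dd j_2]$ have horizontal period exactly $p$. A priori, truncating $R$ to fewer rows could decrease the smallest horizontal period; the point is that vertical periodicity prevents this. Encoding each column by a metasymbol (as in the definition of $H^k_i$), $\hper$ is just the $1$D period of the resulting string, so it suffices to prove that two columns of $B_{\mathrm{top}}$ carry the same metasymbol if and only if the corresponding full columns of $R$ do. One direction is trivial; the converse holds because $B_{\mathrm{top}}$ has height $\ge q$, so folding along the vertical period $q$ shows that agreement on the top $2^k$ rows forces agreement on all $r$ rows. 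Hence $B_{\mathrm{top}}$ and $R$ induce the same metasymbol string up to renaming, giving $\hper(B_{\mathrm{top}})=\hper(R)=p$, and symmetrically for $B_{\mathrm{bot}}$.

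The second main step, which I expect to be the delicate one, pins down the left boundary: I want the horizontal run extending one block to begin exactly at column $j_1$, i.e.\ column $j_1-1$ must break period $p$ for that block. If $j_1=1$ this is automatic. Otherwise I use that $R$ is a $2$D-run, so appending column $j_1-1$ must change $\hper$ or $\vper$. The key sublemma is that if column $j_1-1$ does \emph{not} break period $p$ — equivalently $A[i_1\dd i_2,j_1-1]=A[i_1\dd i_2,j_1-1+p]$, which uses $p\le c$ — then appending it changes neither period: the horizontal claim is immediate, and the vertical claim holds because the appended column is a copy of column $j_1-1+p$, which lies inside $R$ and already respects the vertical period $q$. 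This contradicts the maximality of $R$, so column $j_1-1$ truly breaks period $p$, witnessed by a mismatch in some row $i^\ast\in[i_1,i_2]$. Since the two blocks cover all rows of $R$, this $i^\ast$ lies in $B_{\mathrm{top}}$ or in $B_{\mathrm{bot}}$, and for that block column $j_1-1$ breaks period $p$.

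Finally I would assemble the pieces: let $R'$ be the horizontal run extending the chosen height-$2^k$ block. Step~1 gives $\hper(R')=p=\hper(R)$; Step~2 forces the left boundary of $R'$ to be column $j_1$, so $R'$ shares its top-left (respectively bottom-left) corner with $R$; and since the run contains its block it spans at least columns $j_1\dd j_2$, so $\width(R')\ge\width(R)$. The main obstacle is the sublemma in Step~2 — that a left column preserving the horizontal period also preserves the vertical period — since it is exactly what prevents $R'$ from spilling past $R$ on the left and thereby breaking the coincidence of corners.
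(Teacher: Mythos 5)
Your proof is correct and follows essentially the same route as the paper: truncate $R$ to its top and bottom height-$2^k$ blocks, use the vertical period to show each block's metasymbol string has period exactly $\hper(R)$, and invoke the maximality of $R$ (a left column matching period $p$ would preserve both periods, since it copies column $j_1-1+p$ inside $R$) to force the extending horizontal run of at least one block to start at column $j_1$. The only difference is presentational — the paper phrases the last step as a contradiction assuming both runs spill left of $j_1$, while you locate the mismatch row $i^\ast$ directly — but the underlying argument is identical.
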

\begin{proof}
Let $R=A[i_1 \dd i_2, j_1 \dd j_2]$ be the 2D-run in scope and let $k=\lfloor \log (i_2-i_1+1) \rfloor$.
We have to show that at least one of the two following statements holds.
\begin{itemize}
\item There is a run $R_1=S[j_1 \dd b]$ in $S=H^k_{i_1}$ with smallest period $p$ and $b \geq j_2$.
\item There is a run $R_2=T[j_1 \dd d]$ in $T=H^k_{i_2-2^k+1}$ with smallest period $p$ and $d \geq j_2$.
\end{itemize}

Since $\vper(R)\leq \height(R)/2$, all distinct rows of $R$ are represented in each of $U=S[j_1 \dd j_2]$ and $V=T[j_1 \dd j_2]$ and hence $p=\per(U)=\per(V)$. 
Let $R_1=S[a \dd b]$ be the run that extends $U$ and $R_2=T[c \dd d]$ be the run that extends $V$.
Let us suppose towards a contradiction that $\max (a,c)<j_1$. Then, $A[i_1 \dd i_2, j_1-1]=A[i_1 \dd i_2, j_1-1+p]$, which contradicts $R$ being a run, since $R$ and $B=A[i_1 \dd i_2, j_1-1 \dd j_2]$ have the same horizontal and vertical periods.
\end{proof}

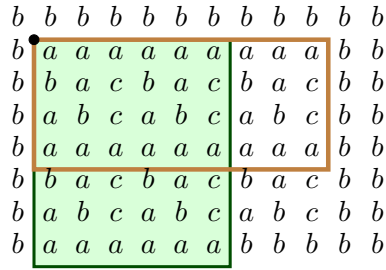
\begin{figure}[htpb]
    \centering
\begin{tikzpicture}[scale=0.43]

\filldraw[xshift=-0.5cm,yshift=-0.1cm, green!15!white] (1,8) rectangle (7,1);

\draw[xshift=-0.5cm,yshift=-0.1cm, green!30!black,line width=0.4mm] (1,8) rectangle (7,1); 

\draw[xshift=-0.5cm,yshift=-0.1cm, brown,line width=0.6mm,] (1,8) rectangle (10,4); 

\filldraw[xshift=-0.5cm,yshift=-0.1cm] (1,8) circle (0.15cm);

\foreach \dx in {1,4,7}{
    \foreach \dy in {2,5}{
        \foreach \x/\y/\c in {0/0/a, 1/0/b, 2/0/c, 0/2/a, 1/2/a, 2/2/a, 0/1/b, 1/1/a, 2/1/c}{
            \draw[xshift=\dx cm,yshift=\dy cm] (\x,\y) node[above] {$\c$};
        }
    }
}

\foreach \x in {1,2,3,4,5,6}{
    \draw (\x,1) node[above] {$a$};
}


\foreach \dx in {0,2,4,6,8,10}{
    \foreach \dy in {8}{
        \foreach \x/\y/\c in {0/0/b, 1/0/b}{
            \draw[xshift=\dx cm,yshift=\dy cm] (\x,\y) node[above] {$\c$};
        }
    }
}

\foreach \dx in {0,10,11}{
    \foreach \dy in {1,2,3,4,5,6,7}{
            \draw[xshift=\dx cm,yshift=\dy cm] (0,0) node[above] {$b$};
    }
}

\foreach \dx in {7,8,9}{
            \draw[xshift=\dx cm,yshift=1 cm] (0,0) node[above] {$b$};
}
\end{tikzpicture}
\caption{The shaded $7\times 6$ subarray is a 2D-run $R$, with vertical period $3$ and horizontal period $p=3$.
The other marked $4\times 9$ rectangle encloses a horizontal run $R'$
with the same top-left corner and the same horizontal period as $R$. We have $2 \cdot p\le width(R)\le width(R')$.
}
    \label{fig:hor_run}
\end{figure}

The sum of the lengths of the runs in a string of length $n$ can be $\Omega(n^2)$ as 
shown in~\cite{DBLP:journals/tcs/GlenS13}.
However, we prove the following lemma, which is crucial for our approach.
We will use it to obtain an overall bound on the possible widths of 2D-runs for our anchors.

\begin{lemma}\label{lem:1d}
For any string $S$ of length $n$ we have that
\[\rho(S)\;:=\; \sum_{R \in \mathcal{R}(S)} (|R| - 2\cdot \per(R)+1) \;= \; \cO(n \log n).\]
\end{lemma}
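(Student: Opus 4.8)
The plan is to reinterpret the sum $\rho(S)$ as the total number of occurrences of primitively rooted squares in $S$, for which the bound $\cO(n\log n)$ is classical (and already invoked in the discussion preceding this section, see~\cite{DBLP:journals/ipl/Crochemore81,DBLP:journals/tcs/StoyeG02}). Once this identification is made, the lemma follows immediately.

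First I would analyze a single run $R=S[i\dd j]$ with $p=\per(R)$, so that $|R|\geq 2p$ and the length-$p$ prefix $U$ of $R$ is primitive. For every starting position $t$ with $i\le t\le j-2p+1$ the window $S[t\dd t+2p-1]$ has period $p$ (inherited from $R$), and its root $S[t\dd t+p-1]$ is a cyclic rotation of $U$; since every cyclic rotation of a primitive string is primitive, $S[t\dd t+2p-1]$ is an occurrence of a primitively rooted square. There are exactly $(j-2p+1)-i+1=|R|-2p+1=|R|-2\per(R)+1$ such positions, so the term contributed by $R$ counts precisely the occurrences of length-$2p$ primitively rooted squares lying inside $R$.

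Next I would verify that this counting is without repetition across runs, which gives a bijection. Recall from the preliminaries that every occurrence of a square is induced by exactly one run, and that an occurrence of a primitively rooted square $V^2$ is extended to a run whose smallest period equals $|V|$ (because $V$ primitive forces $\per(V^2)=|V|$). Thus each primitively rooted square occurrence is assigned to a unique run $R$ with $\per(R)$ equal to its root length, and its starting position lies in the range $[\,i,\,j-2p+1\,]$ counted above for that $R$. Conversely, each position counted for $R$ yields one such occurrence. Summing over all runs therefore shows that $\rho(S)$ equals the total number of occurrences of primitively rooted squares in $S$, which is $\cO(n\log n)$.

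The only delicate point is the bookkeeping in the middle paragraph: one must check both that each length-$2\per(R)$ window inside a run is genuinely a primitively rooted square (the cyclic-rotation-of-a-primitive argument) and that the correspondence between these windows and primitively rooted square occurrences is a bijection, so that no occurrence is counted twice and none is omitted. Everything after that is a direct appeal to the known $\cO(n\log n)$ bound on the number of occurrences of primitively rooted squares.
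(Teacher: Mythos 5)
Your proof is correct and takes essentially the same approach as the paper's: both interpret each unit of $\rho(S)$ as the occurrence of a primitively rooted square of length $2\cdot\per(R)$ starting inside the run $R$ (using the same primitivity argument for the root), and both count such occurrences per position. The only difference is presentational: you establish an exact bijection with all primitively rooted square occurrences and cite the classical $\cO(n\log n)$ bound as a black box, whereas the paper only needs injectivity at each position and re-derives that per-position $\cO(\log n)$ bound inline via the Three Squares Lemma --- which is the same lemma underlying the classical result you invoke.
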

\begin{proof}
We consider for each run $R=S[i \dd j]$ of $S$ the interval $I_R=[i,j-2\cdot \per(R)+1]$. 
Note that $\rho(S)=\sum_{R \in \mathcal{R}(S)} |I_R|$.

Observe that for every $a \in I_R$ the string $S[a \dd a+\per(R)-1]$ is primitive, since if it was of the form $U^k$ for a string $U$ and an integer $k>1$, then $|U|<\per(R)$ would be a period of $R$, a contradiction.
Hence, at each position $a\in I_R$ there is an occurrence of a primitively rooted square of length $2\cdot \per(R)$.

A direct application of the Three Squares Lemma (\cref{lem:3sq}) implies that at most $\cO(\log n)$ primitively rooted squares can start at each position $a$.
Each such square extends to a unique run.
Thus, each position $i$ belongs to $\cO(\log n)$ intervals $I_R$ for $R \in \mathcal{R}(S)$.
This completes the proof.
\end{proof}

We are now ready to prove the main result of this section.

\begin{theorem}
\label{thm:main}
There are $\cO(n^2 \log^2 n)$ 2D-runs in an $n \times n$ array $A$.
\end{theorem}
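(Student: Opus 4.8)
The plan is to combine the anchoring lemma (\cref{lem:align}) with the width-bound lemma (\cref{lem:1d}), charging each 2D-run to a position in a horizontal run at the appropriate scale. The key observation is that \cref{lem:align} assigns to every 2D-run $R$ of height in $[2^k,2^{k+1})$ a horizontal run $R'$ of height exactly $2^k$, sharing the top-left or bottom-left corner of $R$, with $\hper(R')=\hper(R)$ and $\width(R')\ge\width(R)$. Since $R$ is horizontally periodic we have $\width(R)\ge 2\hper(R)=2\hper(R')$, so $R$ is a horizontally periodic fragment of width at least twice the horizontal period sitting inside the horizontal run $R'$. The idea is then to bound, for each fixed $k$ and each fixed anchor column-position, the number of distinct 2D-runs that can be anchored there, and to sum the leftover ``width budget'' via \cref{lem:1d}.

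First I would fix a scale $k\in[1,\lfloor\log n\rfloor]$ and a starting row $i$, and work with the metasymbol string $H^k_i$ defined above, whose runs are exactly the horizontal runs of height $2^k$ starting at row $i$. A 2D-run $R$ anchored (via its top-left corner, say) at column $j_1$ in $R'=H^k_i[a\dd b]$ has $\width(R)\in[2\per(R'),\,b-j_1+1]$; more precisely the right edge $j_2$ of $R$ must satisfy $j_2\le b$ and $j_2\ge j_1+2\per(R')-1$, so the number of candidate widths for $R$ anchored at $(i,j_1)$ inside this horizontal run is at most $b-(j_1+2\per(R')-1)+1$. Summing this count over all starting columns $j_1$ in the run $R'=S[a\dd b]$ of $S=H^k_i$ telescopes to at most $\binom{|R'|-2\per(R')+2}{2}$-type quantity; the cleaner route is to bound it by $(|R'|-2\per(R')+1)\cdot$(number of widths), but since each 2D-run fixes both $j_1$ and $j_2$, the total over the run $R'$ is exactly the number of pairs $(j_1,j_2)$ with $a\le j_1$, $j_2\le b$, $j_2-j_1+1\ge 2\per(R')$, which is $\Theta\big((|R'|-2\per(R')+1)^2\big)$ — too large by a factor. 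The hard part will be avoiding this quadratic blowup: a naive per-anchor count gives $\cO(n^3)$, exactly the bound we must beat.

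To escape the quadratic count I would instead charge each 2D-run only to its \emph{left} column-position $j_1$ and argue that, for fixed scale $k$, row $i$, and fixed left column $j_1$ lying in a horizontal run $R'$ of $H^k_i$, the number of distinct 2D-runs with that anchor is $\cO(1)$ rather than $\cO(\width)$. The mechanism is that the vertical period $\vper(R)$ and the height of $R$ are essentially pinned down by the scale and the local vertical-periodicity structure: once the top-left corner $(i,j_1)$, the horizontal period $p=\per(R')$, and the vertical period are fixed, the 2D-run's right edge $j_2$ is forced as the maximal column for which vertical periodicity persists, and its bottom edge is forced as the maximal row for which both periodicities persist — so $R$ is uniquely determined (up to the top/bottom-corner dichotomy, a factor of $2$). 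Thus the count reduces to summing, over each horizontal run $R'\in\mathcal{R}(H^k_i)$, the number of valid anchor positions $j_1$, namely the positions where a primitively rooted square of period $p$ can begin, which is precisely $|R'|-2\per(R')+1=|I_{R'}|$.

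Then the total becomes
\[
\#\{\text{2D-runs}\}\;\le\; 2\sum_{k=1}^{\lfloor\log n\rfloor}\ \sum_{i=1}^{n-2^k+1}\ \sum_{R'\in\mathcal{R}(H^k_i)}\big(|R'|-2\per(R')+1\big)
\;=\;2\sum_{k,i}\rho\big(H^k_i\big),
\]
and by \cref{lem:1d} each inner term is $\rho(H^k_i)=\cO(n\log n)$. Summing over the $\cO(n)$ choices of $i$ for each $k$ gives $\cO(n^2\log n)$ per scale, and over the $\cO(\log n)$ scales gives the claimed $\cO(n^2\log^2 n)$. The crux of the argument that needs the most care is the uniqueness/$\cO(1)$-per-anchor claim in the previous paragraph: I must verify that fixing the scale $k$ and the anchor corner $(i,j_1)$, together with the horizontal period inherited from $R'$, indeed determines the 2D-run, so that the map from 2D-runs to $\bigcup_{k,i}\{(\text{anchor position in }R')\}$ is $\cO(1)$-to-one; this is where the interplay between the vertical periodicity of $R$ and the exact-height-$2^k$ horizontal run $R'$ must be exploited precisely, and it is the step I expect to be the main obstacle.
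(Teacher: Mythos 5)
Your high-level skeleton (anchoring via \cref{lem:align}, summing via \cref{lem:1d}, constant multiplicity at the end) is the same as the paper's, and your displayed sum is exactly the quantity the paper bounds. However, the step you yourself identify as the crux is not just unproven --- it is false. You claim that for fixed scale $k$, row $i$ and left column $j_1$ (with horizontal period inherited from $R'$) there are $\cO(1)$ 2D-runs, on the grounds that fixing the corner, $\hper$ and $\vper$ forces the right edge and then the bottom edge. This conflates 2D-maximality with a greedy, one-dimension-at-a-time maximization. Two distinct 2D-runs with the same top-left corner and the same periods need not coincide; maximality only forces them to be incomparable (one strictly taller and strictly narrower than the other), and there can be many of them. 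Concretely, let the $a$-region of an array be a staircase (the phenomenon behind \cref{fig:walen}): for heights $h=2^k,2^k+1,\ldots,2^{k+1}-1$ take all-$a$ rectangles of strictly decreasing widths $w(h)\ge 2$ with common top-left corner $(1,1)$, and $b$'s elsewhere. Each rectangle with rows $1\dd h$ and columns $1\dd w(h)$ is then a 2D-run, so $\Theta(n)$ distinct 2D-runs share the same corner, the same $\hper=\vper=1$, and the same dyadic height class $[2^k,2^{k+1})$. Your per-anchor multiplicity is thus $\Theta(n)$, not $\cO(1)$. There is also an internal mismatch in your write-up: by the proof of \cref{lem:align}, every 2D-run anchored in $R'$ has its left column at the left endpoint of $R'$, so your scheme really uses only one anchor per horizontal run; if your $\cO(1)$ claim were correct, your total would be $\cO(n^2\log n)$, not the sum of $|R'|-2\per(R')+1$ that you wrote --- the positions inside $I_{R'}$ play no role in the injection you describe.

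The repair is precisely where the paper differs. The width $c\in[2\cdot\per(U),|U|]$ of the 2D-run is made part of the charging tuple $(i,k,j,c)$; \cref{lem:1d} then bounds the number of pairs $(j,c)$ per meta-string by $\cO(n\log n)$, and the staircase runs above become distinguishable because they have pairwise distinct widths. What remains is a constant bound for each fixed $(i,k,j,c)$, and this needs two ingredients you only gesture at: (i) a Three Squares Lemma argument (\cref{lem:3sq}) showing that at most two vertical periods larger than $2^{k-1}$ are possible, while a vertical period at most $2^{k-1}$ must equal $\vper(R')$ by the Periodicity Lemma (\cref{lem:FW}) --- so at most three candidate vertical periods; and (ii) uniqueness of the 2D-run for fixed corner, width, horizontal and vertical period, which, unlike your version without the width, is true: with the width pinned down, the downward extension of the shorter of two competing runs lies inside the taller one and hence preserves both periods, contradicting maximality. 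Omitting the width from the tuple is exactly what turns a provable uniqueness statement into a false one.
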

\begin{proof}
We will iterate over all horizontal runs $R'=A[i \dd i',j \dd j']$ whose height is a power of $2$, i.e.~$i'=i+2^k-1$ for some $k$.
For each such horizontal run $R'$, we consider the 2D-runs $R$ with:
\begin{enumerate}[(a)]
    \item top-left corner $A[i,j]$ or bottom-left corner $A[i',j]$,
    \item $\hper(R)=\hper(R')$, and\label{it:hper} 
    \item $\height(R) \in [2^k,2^{k+1})$.
\end{enumerate}
For each such 2D-run $R$, we have $\width(R) \in [2 \cdot \hper(R'),\width(R')]$, else the horizontal period would break, i.e.~property~\eqref{it:hper} would be violated.
Let us notice that $R'$ corresponds to a run $U=H^k_i[j \dd j'] \in \mathcal{R} (H^k_i)$.
In particular, $\width(R)\in [2 \cdot \per(U),|U|]$.

\cref{lem:align} implies that each 2D-run is accounted for at least once in this manner.
It is thus enough to bound the number of considered runs.
We have $n$ choices for $i$ and $\log n$ choices for $k$.
Further, due to~\cref{lem:1d}, for each corresponding meta-string $H^k_i$ we have $\cO(n \log n)$ choices for a pair $(j,c)$ such that $U=H^k_i[j \dd j'] \in \mathcal{R} (H^k_i)$ and $c \in [2 \cdot \per(U),|U|]$.
In total, we thus have $\cO(n^2 \log^2 n)$ choices for $(i,k,j,c)$.
We will complete the proof by showing that there is only a constant number of 2D-runs with top-left corner $A[i,j]$, width $w$ and whose height is in the range $[2^k,2^{k+1})$.
(2D-runs with bottom-left corner $A[i',j]$ can be bounded symmetrically.)

\begin{claim}[cf.\ Lemma 10 in \cite{Amir2020}]\label{lem:3vertical}
Let $B$ be an $r \times c$ array with $r \in [2^k, 2^{k+1})$. Then, there
are at most two integers $p>2^{k-1}$ such that $p=\vper(B') \leq \height(B')/2$ for $B'$ consisting of the top $\height(B')\ge 2^k$ rows of $B$.
\end{claim}
\begin{proof}
Consider $S$ to be the meta-string obtained by replacing the rows of $B$ by single letters.
Then, a direct application of the Three Squares Lemma  (\cref{lem:3sq}) to $S$ yields the claimed bound.
\end{proof}

We apply \cref{lem:3vertical} to $B=A[i\dd \min(i+2^{k+1}-2,n),j\dd j+c-1]$.
If $\vper(R) \le 2^{k-1}$, then $\vper(R) = \vper(R')$ by the Periodicity Lemma (Lemma~\ref{lem:FW}) applied to the meta-string obtained by replacing the rows of the intersection of $R'$ and $B$ by single letters.
Now \cref{lem:3vertical} implies that there are at most three choices to make for the vertical period: $\vper(R')$ and the two integers from the claim. 
Finally, for fixed top-left corner, width and vertical period we can have a single 2D-run.
This concludes the proof.
\end{proof}

Amir et al.~\cite{Amir2020} presented the following algorithmic result.

\begin{theorem}[\cite{Amir2020}]\label{thm:ami}
All 2D-runs in an $n \times n$ array can be computed in $\cO(n^2 \log n + \textsf{output})$ time, where $\textsf{output}$ is the number of 2D-runs reported.
\end{theorem}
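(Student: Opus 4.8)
The statement is algorithmic, so the plan is to turn the anchoring framework behind \cref{thm:main} into an output-sensitive procedure. First I would build, for every level $k\in[1,\lfloor\log n\rfloor]$, all the meta-strings $H^k_i$. The integer names of the height-$2^k$ column fragments can be produced by a Karp--Miller--Rosenberg style name-doubling run simultaneously on all $n$ columns of $A$: at level $k$ each height-$2^k$ fragment is named by the ordered pair of the level-$(k-1)$ names of its two halves, and the resulting $\cO(n^2)$ pairs are radix-sorted into consistent names. Over the $\cO(\log n)$ levels this costs $\cO(n^2\log n)$. After relabelling each $H^k_i$ onto a linearly-sortable alphabet of size at most $n$, I would run the Kolpakov--Kucherov linear-time runs algorithm~\cite{KK:99} on every $H^k_i$, obtaining all horizontal runs of every power-of-two height in $\cO(n^2\log n)$ total time; by the runs theorem~\cite{runstheorem} there are $\cO(n^2\log n)$ of them.

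By \cref{lem:align}, every 2D-run $R$ is anchored at the top-left or bottom-left corner of one of these horizontal runs $R'$, with $\hper(R)=\hper(R')$ and $\width(R')\ge\width(R)$, so I would iterate over horizontal runs as anchors. Fix an anchor with corner $(i,j)$, horizontal period $p=\hper(R')$ and level $k$; any anchored 2D-run has height in $[2^k,2^{k+1})$, width in $[2p,\width(R')]$, and, by \cref{lem:3vertical} applied as in the proof of \cref{thm:main}, its vertical period is one of only $\cO(1)$ candidates $q$. For a fixed pair $(p,q)$, the admissible configurations form a staircase region in the $(\width,\height)$-plane: the largest height with vertical period $q$ at width $w$ equals $q$ plus the minimum, over columns $j,\ldots,j+w-1$, of the vertical longest-common-extension of that column at rows $i$ and $i+q$, hence is non-increasing in $w$; symmetrically, the largest admissible width at height $r$ is $p$ plus a minimum of horizontal extensions over rows $i,\ldots,i+r-1$. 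Supporting these with a column-LCE and a row-LCE structure (each built in $\cO(n^2\log n)$ time and answering queries in $\cO(1)$), a genuine 2D-run with periods $(p,q)$ is exactly a mutually maximal corner of the two staircases, and each such corner is tested for being a true 2D-run (smallest periods exactly $(p,q)$ together with four-directional maximality) in $\cO(1)$. Since every 2D-run is seen from at most two anchors, I would finish by radix-sorting the coordinates of the reported runs to remove duplicates, in time linear in the output.

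The main obstacle is making this per-anchor reporting output-sensitive rather than paying for every admissible width: the combinatorial count in \cref{thm:main} sums $(|U|-2\per(U)+1)$ over the horizontal runs $U$ and is $\Theta(n^2\log^2 n)$ by \cref{lem:1d}, which exceeds the $\cO(n^2\log n)$ allowed outside the output term. The way around this is to charge work only to the staircase corners and never to intermediate widths, jumping from one corner to the next with a constant number of LCE and range-minimum queries (locating the next left-to-right minimum of the precomputed extensions). The delicate point of the analysis is then to argue that the corners \emph{failing} the 2D-run test, summed over all anchors and the $\cO(1)$ period pairs, still number only $\cO(n^2\log n)$, so that these ``empty'' probes do not dominate; this is the step I expect to require the most care. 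One must also keep the auxiliary 2D indexing light: only the row- and column-DBFs of size $\cO(n^2\log n)$ underlying the LCE queries are needed here, not the full $\cO(n^2\log^2 n)$ 2D-DBF, so the preprocessing stays within budget.
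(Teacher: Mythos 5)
You should first note that the paper itself contains no proof of this statement: \cref{thm:ami} is imported wholesale from Amir et al.~\cite{Amir2020}, and the present paper only uses it as a black box (combining it with \cref{thm:main} to get \cref{cor:runs}). So your reconstruction has to stand entirely on its own, and it does not. Your preprocessing is sound and matches the framework the paper describes (meta-strings $H^k_i$ via name-doubling, Kolpakov--Kucherov on each, $\cO(n^2\log n)$ horizontal runs as anchors via \cref{lem:align}), but the enumeration step --- which is the whole content of the theorem --- has two genuine holes.

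First, your claim that for a \emph{fixed anchor} the vertical period of an anchored 2D-run is one of $\cO(1)$ candidates is not what \cref{lem:3vertical} gives. That claim bounds the number of candidate vertical periods $q>2^{k-1}$ for a \emph{fixed width} $c$: the array $B$ in the claim has prescribed width, and as the width shrinks the vertical-period structure of the slab can only get richer, so the pair of large candidates genuinely varies with $w$. This is exactly why the proof of \cref{thm:main} counts quadruples $(i,k,j,c)$ --- constant per \emph{(anchor, width)}, not per anchor. Consequently your ``fixed $(p,q)$ staircase, $\cO(1)$ period pairs per anchor'' setup is ill-founded: per anchor you may have to handle a number of distinct $q$'s proportional to the number of admissible widths, which puts you back to paying per width. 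Second, even granting the staircase structure, the step you explicitly defer --- showing that the probes/corners that \emph{fail} the 2D-run test total $\cO(n^2\log n)$ over all anchors --- is precisely the crux of the theorem. Without it, the best your scheme yields is the $\cO(n^2\log^2 n)$ bound obtained by testing every (anchor, width, candidate period) triple from the proof of \cref{thm:main}; that is enough for \cref{cor:runs}, but it is not the output-sensitive $\cO(n^2\log n + \textsf{output})$ bound claimed here, which must beat $n^2\log^2 n$ when the number of 2D-runs is small. Deferring that argument is not a presentational gap; it is the theorem.
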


By combining~\cref{thm:main,thm:ami} we get the following corollary.

\begin{corollary}\label{cor:runs}
All 2D-runs in an $n \times n$ array can be computed in $\cO(n^2 \log^2 n)$ time.
\end{corollary}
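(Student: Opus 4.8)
The plan is to combine the two cited results directly, observing that the output-sensitive running time of \cref{thm:ami} becomes a worst-case bound once we substitute the combinatorial bound of \cref{thm:main} on the size of the output. No new algorithmic idea is needed; the entire argument is a chaining of the preceding theorem with the combinatorial bound just established.

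First I would recall that \cref{thm:ami} computes all 2D-runs in time $\cO(n^2 \log n + \textsf{output})$, where $\textsf{output}$ denotes the number of reported 2D-runs. This is the only algorithmic ingredient required, and I would not attempt to reanalyse or modify the algorithm of Amir et al.\ at all; I would treat its running-time guarantee as a black box.

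Next I would invoke \cref{thm:main}, which guarantees that $\textsf{output} = \cO(n^2 \log^2 n)$ for any $n \times n$ array $A$. Substituting this bound into the running time of \cref{thm:ami} yields $\cO(n^2 \log n + n^2 \log^2 n)$. Since $n^2 \log n = \cO(n^2 \log^2 n)$, the first term is dominated by the second, so the total running time simplifies to $\cO(n^2 \log^2 n)$, as claimed.

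There is no substantive obstacle here: the whole content of the corollary resides in the two theorems it cites, and the only step that carries the argument is the recognition that the previously output-dependent term $\textsf{output}$ is now provably bounded by our new combinatorial result. The sole point requiring (trivial) care is the domination $n^2 \log n = \cO(n^2 \log^2 n)$, which ensures that the additive algorithmic overhead never exceeds the bound on the number of runs.
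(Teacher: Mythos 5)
Your proposal is correct and matches the paper's own argument exactly: the corollary is obtained by substituting the $\cO(n^2\log^2 n)$ bound of \cref{thm:main} for $\textsf{output}$ in the $\cO(n^2\log n + \textsf{output})$ running time of \cref{thm:ami}, with the $n^2\log n$ term absorbed. No further justification is needed.
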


\section{Upper Bound on the Number of Distinct Quartics}\label{sec:quartics}
Fact~\ref{fct:ABr} that originates from~\cite{DBLP:journals/tcs/ApostolicoB00} shows that an $n \times n$ array $A$ has $\cO(n^2 \log^2n)$ occurrences of primitively rooted quartics. This obviously implies that the number of distinct primitively rooted quartics is upper bounded by $\cO(n^2 \log^2n)$. Unfortunately, an array can contain $\Theta(n^4)$ occurrences of general quartics; this takes place e.g.~for a unary array. In this section we show that $\cO(n^2 \log^2n)$ is also an upper bound for the number of \emph{distinct} general quartics, i.e.~subarrays of $A$ of the form $W^{\alpha,\beta}$ for even $\alpha,\beta\ge 2$ and primitive $W$.

The following lemma and its corollary are the combinatorial foundation of our proofs. An array $W$ with $\height(W) \in [2^a,2^{a+1})$ and $\width(W) \in [2^b,2^{b+1})$ will be called an \emph{$(a,b)$-array}.

\begin{lemma}\label{lem:per2D}
Let $a,b$ be non-negative integers and $W,W'$ be different primitive $(a,b)$-arrays. If occurrences of $W^{2,3}$ and $(W')^{2,3}$ (of $W^{3,2}$ and $(W')^{3,2}$, respectively) in $A$ share the same corner (i.e., top-left, top-right, bottom-left or bottom-right), then $\width(W)=\width(W')$ ($\height(W)=\height(W')$, respectively).
\end{lemma}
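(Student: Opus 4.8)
The plan is to reduce this two-dimensional statement to a one-dimensional periodicity argument by collapsing a suitable band of rows into metasymbols and then invoking the Fine--Wilf Periodicity Lemma (\cref{lem:FW}). By reflecting $A$ horizontally or vertically it suffices to treat the case where the two occurrences share their \emph{top-left} corner, and by transposing $A$ the $W^{3,2}$ statement about heights follows from the $W^{2,3}$ statement about widths. So I would only prove that shared top-left occurrences of $W^{2,3}$ and $(W')^{2,3}$ force $\width(W)=\width(W')$.

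Write $w=\width(W)$, $w'=\width(W')$, $h=\height(W)$, $h'=\height(W')$ and assume without loss of generality $w\le w'$. Since $W$ and $W'$ are both $(a,b)$-arrays, all four dimensions lie in dyadic ranges, which yields the two crucial slack inequalities $w'<2w$ and $2\min(h,h')\ge\max(h,h')$. Placing the common corner at $(1,1)$, I set $H=2\min(h,h')$ and define a 1D metastring $S$ over the columns by letting $S[j]$ encode the column fragment $A[1\dd H,j]$, with $S[j]=S[j']$ iff these fragments are equal. Because $H\le 2h$ and $H\le 2h'$, the horizontal periodicities of the two occurrences survive the restriction: $S[1\dd 3w]$ has period $w$ and $S[1\dd 3w']$ has period $w'$, so the prefix $S[1\dd 3w]$ carries both periods $w$ and $w'$.

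First I would apply \cref{lem:FW}: since $w+w'<3w=|S[1\dd 3w]|$, the value $g=\gcd(w,w')$ is a period of $S[1\dd 3w]$. Suppose $w<w'$. Then $g\mid w$ with $g<w$, because $w\mid w'$ is impossible (no multiple of $w$ lies strictly between $w$ and $2w$). In particular $S[1\dd w]$ has period $g$, i.e.\ the subarray $A[1\dd H,1\dd w]$ has horizontal period $g$. The inequality $H\ge\max(h,h')\ge h$ then lets me pull this period back to the top $h$ rows, so $W=A[1\dd h,1\dd w]$ itself has horizontal period $g\mid w$ with $g<w$; hence $W=(A[1\dd h,1\dd g])^{1,\,w/g}$ with $w/g\ge 2$, contradicting the primitivity of $W$. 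Therefore $w=w'$.

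The part I expect to require the most care is the bookkeeping that makes the parameters $(2,3)$ exactly right. The horizontal triplication is what guarantees that the common prefix has length $3w\ge w+w'$, which is needed to trigger Fine--Wilf once $w'$ may be almost $2w$; the vertical doubling is what forces $H=2\min(h,h')$ to be large enough to contain every row of the narrower block, and hence to transfer the inferred one-dimensional period back into a genuine two-dimensional tiling of $W$. Verifying both dyadic slack inequalities and checking that the period restriction/inheritance between $S$ and the two arrays is valid are the main obstacles; everything else is routine.
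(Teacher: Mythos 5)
Your proof is correct and follows essentially the same route as the paper's: reduce by symmetry to shared top-left corners of $W^{2,3}$ and $(W')^{2,3}$, use the dyadic ranges of an $(a,b)$-array to ensure the overlap is wide enough ($3w \ge w + w'$) and tall enough to contain both roots, apply the Periodicity Lemma (\cref{lem:FW}) to get the period $\gcd(w,w')$, and derive a contradiction with the primitivity of $W$ when $w < w'$. The only cosmetic difference is that you collapse columns of the overlap into a single metastring before invoking Fine--Wilf, while the paper applies the lemma to each row of the overlap individually; the substance is identical.
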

\begin{proof}
Clearly it is sufficient to prove the lemma for $W^{2,3}$ and $(W')^{2,3}$. Assume w.l.o.g.\ that occurrences of $W^{2,3}$ and $(W')^{2,3}$ in $A$ share the top-left corner and consider their overlap $X$.

Each of the rows of $X$ has periods $\width(W)$ and $\width(W')$. Let us assume w.l.o.g.\ that $\width(W) \le \width(W')$.
Then
\[\width(X) = 3\cdot \width(W) \ge \width(W)+2^{a+1} \ge \width(W)+\width(W').\]
By the Periodicity Lemma (Lemma~\ref{lem:FW}), $p=\gcd(\width(W),\width(W'))$ is a horizontal period of $X$.

The array $X$ contains at least one occurrence of $W$ and $W'$ in its top-left corner. Hence, $W$ and $W'$ have a horizontal period $p$. If $\width(W) < \width(W')$, then $\width(W')$ cannot be a multiple of $\width(W)$, because then we would have $\width(W') > 2^{a+1}$. Hence, if $\width(W) < \width(W')$, we would have $p < \width(W)$ which by $p \mid \width(W)$ would mean that $W$ is not primitive. This indeed shows that $\width(W)=\width(W')$.
\end{proof}

\begin{corollary}\label{cor:per2D}
Let $a,b$ be non-negative integers and $W,W'$ be different $(a,b)$-arrays. If occurrences of $W^{3,3}$ and $(W')^{3,3}$ in $A$ share the same corner (i.e., top-left, top-right, bottom-left or bottom-right), then at least one of $W$, $W'$ is not primitive.
\end{corollary}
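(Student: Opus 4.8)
The plan is to prove the contrapositive: assuming that \emph{both} $W$ and $W'$ are primitive, I will derive $W=W'$, which contradicts the hypothesis that they are different $(a,b)$-arrays. The enabling observation is that an occurrence of $W^{3,3}$ contains, anchored at any fixed corner, both an occurrence of $W^{2,3}$ and an occurrence of $W^{3,2}$ sharing that corner; the same holds for $(W')^{3,3}$. This is precisely what makes both orientations of \cref{lem:per2D} applicable to the shared-corner sub-configurations.

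First I would fix the corner at which $W^{3,3}$ and $(W')^{3,3}$ coincide, say the top-left corner (the other three cases are symmetric). Since the two $3\times 3$ tilings share this corner, so do their top-left $W^{2,3}$ and $(W')^{2,3}$ sub-occurrences. As $W$ and $W'$ are assumed primitive $(a,b)$-arrays, \cref{lem:per2D} applies to this pair and yields $\width(W)=\width(W')$. Applying the analogous (transposed) statement of \cref{lem:per2D} to the top-left $W^{3,2}$ and $(W')^{3,2}$ sub-occurrences, which also share the corner, yields $\height(W)=\height(W')$.

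Finally I would conclude: $W$ and $W'$ now have identical dimensions and, sharing the top-left corner, occupy exactly the same cells of $A$ (the top-left copy in each tiling). Hence $W=W'$ as arrays, giving the desired contradiction, so at least one of $W,W'$ must fail to be primitive. The argument presents no real obstacle beyond the structural remark that a $3\times 3$ tiling simultaneously exposes a $2\times 3$ and a $3\times 2$ sub-tiling at a common corner; once that is noted, both invocations of \cref{lem:per2D} are immediate and the dimension-matching forces equality.
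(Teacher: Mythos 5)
Your proof is correct and matches the paper's intended derivation: the paper states this as an immediate corollary of \cref{lem:per2D}, and the implicit argument is exactly yours — a $3\times 3$ power exposes both a $2\times 3$ and a $3\times 2$ power at the shared corner, so both orientations of the lemma force equal widths and heights, whence the two corner copies coincide and $W=W'$, contradicting the hypothesis.
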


If $V^{2,2}$ is a non-primitively rooted quartic, then there exists a primitive array $W$ such that $V=W^{\alpha,\beta}$ and at least one of $\alpha,\beta$ is greater than one. We will call the quartic $W^{2\alpha,2\beta}$ \emph{thin} if $\alpha=1$ or $\beta=1$ for this decomposition, and \emph{thick} otherwise.
We refer to \emph{points} in $A$ as the $(n+1)^2$ positions where row and column delimiters intersect.
Let us first bound the number of distinct thin quartics. For $\beta>1$, we consider any rightmost occurrence of every such quartic, that is, any occurrence $A[i_1 \dd i_2,j_1 \dd j_2]$ that maximizes $j_1$.

\begin{lemma}\label{lem:qtr_thin}
The total number of distinct thin quartics in $A$ is $\cO(n^2 \log^2n)$.
\end{lemma}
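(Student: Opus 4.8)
The plan is to treat the two families of thin quartics separately: the \emph{horizontally-thin} ones, $W^{2,2\beta}$ with $\beta>1$ (the block repeated only twice vertically but at least four times horizontally), and the \emph{vertically-thin} ones, $W^{2\alpha,2}$ with $\alpha>1$. These two families are exchanged by transposing $A$, so it suffices to bound the horizontally-thin quartics. Following the set-up preceding the statement, I assign to each distinct horizontally-thin quartic a rightmost occurrence $A[i_1\dd i_2,j_1\dd j_2]$ and charge it to the pair consisting of its top-left corner $(i_1,j_1)$ and the size class $(a,b)$ of its primitive root $W$. Since there are $\cO(n^2)$ corners and $\cO(\log^2 n)$ size classes, it is enough to prove that only $\cO(1)$ distinct horizontally-thin quartics are charged to each such pair.

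First I would pin down the horizontal dimension. As $\beta>1$, the quartic contains $W^{2,3}$ anchored at its top-left corner $(i_1,j_1)$; hence, if two horizontally-thin quartics with primitive roots $W,W'$ in the same class $(a,b)$ are both charged to $(i_1,j_1)$, then \cref{lem:per2D} forces $\width(W)=\width(W')$. Thus the root width $d=\width(W)$, and with it the horizontal period shared by all quartics in this slot, is uniquely determined.

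Next I would pin down the vertical dimension. Reading the quartic column-wise is awkward, so instead I contract each row to a single metasymbol over the fixed column band $[j_1,j_1+2d-1]$, which is common to all quartics in the slot since $\beta\ge 2$ forces width at least $4d$. Restricted to this band the quartic $W^{2,2\beta}$ becomes $W^{2,2}$, which in the row-metastring is the square of the height-$\height(W)$ block $W^{1,2}$; primitivity of $W$ makes this block primitive as a metastring, so we obtain a primitively rooted square starting at row $i_1$ of root length $\height(W)\in[2^a,2^{a+1})$. The key point is then a range-restricted use of the Three Squares Lemma (\cref{lem:3sq}), exactly in the spirit of \cref{lem:3vertical}: if three primitively rooted squares started at the same position with root lengths in $[2^a,2^{a+1})$, the two shorter roots would already sum to more than $2^{a+1}$, exceeding the longest, a contradiction; hence at most two values of $\height(W)$ occur. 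Finally, for a fixed corner, root width $d$, and root height, the rightmost condition pins the number of repetitions: if $W^{2,2\beta}$ and $W^{2,2\beta'}$ with $\beta<\beta'$ were both rightmost here, the horizontal periodicity of the longer one would exhibit an occurrence of the shorter one exactly one period to the right, contradicting maximality of $j_1$. So each slot carries $\cO(1)$ quartics, and summing over all corners and size classes gives the claimed $\cO(n^2\log^2 n)$.

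The main obstacle, and the step I would be most careful about, is precisely this last counting: a single corner could a priori host thin quartics of many heights and many widths, so the naive per-corner bound is far too large. The remedy is to charge against the size class and collapse the admissible \emph{heights} to $\cO(1)$ via the restricted-range Three Squares argument in the vertical direction, while the rightmost-occurrence choice simultaneously prevents a quartic from being charged to many corners and forces the horizontal period to break immediately past the right edge, collapsing the admissible \emph{widths}. The one combinatorial detail that needs verification for the argument to go through is that the row-contracted block $W^{1,2}$ is genuinely primitive whenever $W$ is primitive, which follows because $\vper(W^{1,2})=\vper(W)$ either equals $\height(W)$ or fails to divide it, so $W^{1,2}$ is never a perfect power.
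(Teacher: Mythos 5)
Your proof is correct and takes essentially the same route as the paper's: you charge each thin quartic's rightmost occurrence to its top-left corner and size class $(a,b)$, fix the root width via \cref{lem:per2D}, limit the possible root heights to two by applying the Three Squares Lemma (\cref{lem:3sq}) to row-metastrings, and use rightmost-ness to rule out two quartics sharing the same root, exactly as in the paper's argument. The only cosmetic differences are that you contract rows over a band of width $2\cdot\width(W)$ (making the square's root the metastring of $W^{1,2}$) where the paper uses a band of width $\width(W)$, and that you phrase the height bound as a direct count rather than as a contradiction from three coinciding rightmost occurrences.
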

\begin{proof}
We give a proof for quartics of the form $W^{2,2\beta}$ for primitive $W$ and $\beta>1$; the proof for quartics of the form $W^{2\alpha,2}$ for $\alpha>1$ is symmetric. We consider each pair of positive integers $a,b$ and show that each point holds the top-left corner of at most two rightmost occurrences of $W^{2,2\beta}$ for primitive $(a,b)$-arrays $W$ and $\beta>1$.

Assume to the contrary that the rightmost occurrences of $W^{2,2\beta}$, $(W')^{2,2\beta'}$ and $(W'')^{2,2\beta''}$ share their top-left corner for primitive $(a,b)$-arrays $W,W',W''$. The arrays $W,W',W''$ are pairwise different, since otherwise one of the occurrences would not be the rightmost. By Lemma~\ref{lem:per2D}, we have $\width(W)=\width(W')=\width(W'')$. Assume w.l.o.g.\ that $\height(W) < \height(W') < \height(W'')$.

Let $(i,j)$ denote the top-left corner of the three quartics. Let us consider three length-$2\ell$ strings formed of metacharacters that correspond to row fragments:
\[(A[i,j \dd j+w-1]),\ldots,(A[i+2\ell-1,j \dd j+w-1])\]
for $w=\width(W)$ and $\ell \in \{\height(W),\height(W'),\height(W'')\}$. All the three strings need to be primitively rooted squares. We apply the Three Squares Lemma (Lemma~\ref{lem:3sq}) to conclude that
$\height(W'') > \height(W)+\height(W') > 2^{a+1},$
a contradiction.
\end{proof}

Now let us proceed to thick quartics. Unfortunately, in this case a single point can be the top-left corner of a linear number of rightmost occurrences of thick quartics; see the example in \cref{fig:walen}.
Let us consider an occurrence of $W^{\alpha,\beta}$ for even $\alpha,\beta > 2$ and primitive $W$, called a \emph{positioned quartic}. It implies $\alpha\cdot \beta$ occurrences of $W$. Let us call all corners of all these occurrences of $W$ \emph{special points} of this positioned quartic. Each special point stores a direction in $\{\text{top-left},\text{top-right},$ $\text{bottom-left},\text{bottom-right}\}$. A special point has one of the directions if it is the respective corner of an occurrence of $W^{3,3}$ in this positioned quartic. Clearly, since $\alpha,\beta \ge 4$, for every special point in $W^{\alpha,\beta}$ except for the middle row if $\alpha=4$ or middle column if $\beta=4$, one can assign such a direction (if many directions are possible, we choose an arbitrary one); see \cref{fig:arrows}.

\begin{figure}[htpb]
\begin{center}

\begin{tikzpicture}[scale=0.55]

\foreach \xa/\xb/\y in {0/2/0, 0/2/1, 0/2/2} { \foreach \x in {\xa,...,\xb} {
  \draw[->,thick] (\x,\y)--+(0.4,0.4);
}}

\foreach \xa/\xb/\y in {3/6/0, 3/6/1, 3/6/2} { \foreach \x in {\xa,...,\xb} {
  \draw[->,thick] (\x,\y)--+(-0.4,0.4);
}}

\foreach \xa/\xb/\y in {0/2/3, 0/2/4, 0/2/5, 0/2/6} { \foreach \x in {\xa,...,\xb} {
  \draw[->,thick] (\x,\y)--+(0.4,-0.4);
}}

\foreach \xa/\xb/\y in {3/6/3, 3/6/4, 3/6/5, 3/6/6} { \foreach \x in {\xa,...,\xb} {
  \draw[->,thick] (\x,\y)--+(-0.4,-0.4);
}}

\draw[step=1] (0,0) grid (6,6);
\draw[thick,darkblue] (0,0) rectangle (6,6);
\draw (-1.3,0) node[above] {$W^{6,6}$};

\begin{scope}[xshift=10cm]
\foreach \xa/\xb/\y in {0/1/0, 0/1/1} { \foreach \x in {\xa,...,\xb} {
  \draw[->,thick] (\x,\y)--+(0.4,0.4);
}}

\foreach \xa/\xb/\y in {3/4/0, 3/4/1} { \foreach \x in {\xa,...,\xb} {
  \draw[->,thick] (\x,\y)--+(-0.4,0.4);
}}

\foreach \xa/\xb/\y in {0/1/3, 0/1/4} { \foreach \x in {\xa,...,\xb} {
  \draw[->,thick] (\x,\y)--+(0.4,-0.4);
}}

\foreach \xa/\xb/\y in {3/4/3, 3/4/4} { \foreach \x in {\xa,...,\xb} {
  \draw[->,thick] (\x,\y)--+(-0.4,-0.4);
}}

\draw[step=1] (0,0) grid (4,4);
\draw[thick,darkblue] (0,0) rectangle (4,4);
\draw (-1.3,0) node[above] {$W^{4,4}$};
\end{scope}

\end{tikzpicture}

\end{center}
\caption{Special points of a positioned quartic with primitive root $W$
with associated directions of four types. 
The arrow indicates the corner (four possibilities) of $W^{3,3}$
which is contained in the quartic. If several assignments of directions are
possible, only one of them is chosen (it does not matter which one).
In case of $W^{4,4}$ the middle row and column are not special.
}\label{fig:arrows}
\end{figure}
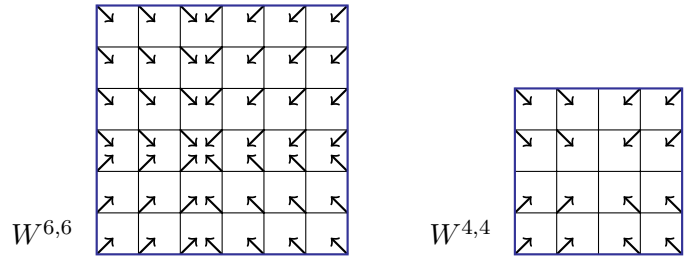

The quartics with primitive root $W$ are called \emph{$W$-quartics}.
The set of all special points (with directions) of {\it all} positioned
thick $W$-quartics for a given $W$ is denoted by $\SP(W)$. 
Among $W$-quartics of the same height we distinguish the ones with 
maximal width, which we call \emph{h-maximal} (horizontally maximal).
Let us observe that each $W$-quartic is contained in an occurrence of some h-maximal $W$-quartic.

\begin{theorem}\label{thm:dist_q}
The number of distinct quartics in an $n \times n$ array is $\cO(n^2 \log^2n)$.
\end{theorem}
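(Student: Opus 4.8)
The plan is to split the distinct quartics into three classes and bound each by $\cO(n^2\log^2 n)$. Primitively rooted quartics are immediate from \cref{fct:ABr}, since the number of distinct such quartics is at most the number of their occurrences, and thin quartics are handled by \cref{lem:qtr_thin}. What remains is to bound the number of distinct \emph{thick} quartics, i.e.\ subarrays of the form $W^{\alpha,\beta}$ with primitive $W$ and even $\alpha,\beta\ge 4$. Writing a thick quartic as a \emph{$W$-quartic} for its primitive root $W$, I would establish $\sum_{W}\bigl(\#\text{distinct thick }W\text{-quartics}\bigr)=\cO(n^2\log^2 n)$, the sum ranging over all primitive subarrays $W$ of $A$, and the whole argument hinges on the set $\SP(W)$ of special points.

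The first ingredient is a bound on $\sum_W|\SP(W)|$. Fix a type $(a,b)$. I claim the sets $\SP(W)$, over primitive $(a,b)$-arrays $W$, are pairwise disjoint as sets of (point, direction) pairs: a pair $(p,d)\in\SP(W)$ witnesses that $p$ is the $d$-corner of an occurrence of $W^{3,3}$, so if the same pair lay in $\SP(W')$ for a different primitive $(a,b)$-array $W'$, then occurrences of $W^{3,3}$ and $(W')^{3,3}$ would share the corner $p$, contradicting \cref{cor:per2D}. Hence $\sum_{W\colon (a,b)}|\SP(W)|\le 4(n+1)^2=\cO(n^2)$, and summing over the $\cO(\log^2 n)$ types $(a,b)$ (as $a,b\in[0,\lfloor\log n\rfloor]$) gives $\sum_W|\SP(W)|=\cO(n^2\log^2 n)$.

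The second ingredient charges distinct thick $W$-quartics to $\SP(W)$ with $\cO(1)$ multiplicity. For a distinct thick $W$-quartic $Q=W^{\alpha,\beta}$ I would fix a canonical occurrence, namely the \emph{topmost-leftmost} one (the occurrence minimizing its top-left corner lexicographically), and map $Q$ to the bottom-right corner of that occurrence with the bottom-right direction; as $\alpha,\beta\ge 4$, this is a genuine element of $\SP(W)$. This map is already injective on \emph{comparable} pairs: if $Q_1=W^{\alpha_1,\beta_1}$ and $Q_2=W^{\alpha_2,\beta_2}$ share this image with $\alpha_1\le\alpha_2$ and $\beta_1\le\beta_2$, then $Q_1$ is the top-left $(\alpha_1\times\beta_1)$-subblock of $Q_2$'s occurrence, so $Q_1$ also occurs at $Q_2$'s (lexicographically smaller) top-left corner, contradicting the canonicity of $Q_1$'s occurrence. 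For the remaining \emph{incomparable} pairs ($\alpha_1<\alpha_2$ but $\beta_1>\beta_2$) sharing a bottom-right corner $P$, I would collapse the relevant height-$\width(W)$ columns (respectively width-$\height(W)$ rows) anchored at $P$ into one-dimensional meta-strings and use that each $Q_r$ is contained in an occurrence of an h-maximal $W$-quartic of its own height; the widths ending at the column of $P$ then correspond to primitively rooted squares sharing an endpoint in the horizontal meta-string, and \cref{lem:3sq} caps their number by a constant. This yields $\#\text{distinct thick }W\text{-quartics}=\cO(|\SP(W)|)$; combining with the first ingredient and the primitively rooted and thin counts proves the theorem.

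The hard part will be precisely this last $\cO(1)$-multiplicity bound. A naive charge to a single corner is doomed, since by \cref{fig:walen} one point can be the corner of $\Theta(n^2)$ thick-quartic occurrences, and even restricting to canonical occurrences does not immediately separate quartics that form an ``incomparable staircase'' sharing one corner. The crux is therefore to convert the canonicity of the topmost-leftmost occurrence into left- and up-maximality statements and to package these, in \emph{both} the horizontal and the vertical meta-string, as families of primitively rooted squares sharing an endpoint, so that the Three Squares Lemma bounds each family by a constant; this is exactly the step where the h-maximality of the enclosing $W$-quartics and the primitivity of $W$ are essential.
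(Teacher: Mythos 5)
Your overall skeleton is the paper's: split into primitively rooted (\cref{fct:ABr}), thin (\cref{lem:qtr_thin}) and thick quartics, and use disjointness of the sets $\SP(W)$ over primitive $(a,b)$-arrays $W$ (via \cref{cor:per2D}) to get the $4(n+1)^2$ bound per type; that first ingredient is correct. The gap is in your second ingredient: the claim that the map ``thick $W$-quartic $\mapsto$ bottom-right corner of its topmost-leftmost occurrence'' has $\cO(1)$ multiplicity is simply \emph{false}, so no argument (Three Squares Lemma or otherwise) can establish it. Concretely, take $R=\lfloor n/2\rfloor$ and let row $i$, for $i\in[1,R]$, contain the letter $a$ exactly in columns $[n-2i+1,n]$, with $b$ everywhere else. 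For every $i$ with $h=R-i+1\geq 4$ even and $2i\geq 4$, the array $a^{h,2i}$ is a thick quartic with a \emph{unique} occurrence, at rows $[i,R]$ and columns $[n-2i+1,n]$: any $h$ consecutive rows have common $a$-columns of width exactly twice the index of the topmost row, which forces the top row to be $i$ and then forces the columns. Hence each of these $\Theta(n)$ pairwise-incomparable thick quartics has its canonical occurrence ending at the same bottom-right point $(R,n)$, and your map sends all of them to the same (point, direction) pair. This is precisely the phenomenon the paper flags for thick quartics (cf.\ \cref{fig:walen}): per-quartic canonical occurrences, whether rightmost or topmost-leftmost, cannot be charged to their corners.

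Your proposed rescue via \cref{lem:3sq} also fails on its own terms: the squares arising in the horizontal meta-string from quartics $W^{\alpha_r,\beta_r}$ ending at $P$ have roots $W^{1,\beta_r/2}$, which are \emph{non-primitive} as meta-strings whenever $\beta_r\geq 4$ (they have period $\width(W)$) --- i.e.\ exactly in the thick case --- so the Three Squares Lemma gives no bound, consistent with the widths $4,8,12,\dots$ all ending at column $n$ in the example above. The paper's actual charging scheme avoids fixing an occurrence per quartic: thick $W$-quartics are grouped by height, and for each height $\alpha\cdot\height(W)$, processed in increasing order of $\alpha$, one selects a \emph{single} positioned h-maximal $W$-quartic $U_\alpha$; all distinct $W$-quartics of that height (there are at most as many as special points in one row of $U_\alpha$, since they differ only in width) are charged to the special points of one row of $U_\alpha$ whose points are all still unmarked, and that row is then marked. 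A fresh row always exists because the heights increase, giving $\Thick(W)<|\SP(W)|$; combined with your (correct) disjointness ingredient this finishes the proof. In the staircase example this scheme works because different heights get charged to rows of \emph{different} h-maximal occurrences, spread across the array, rather than all collapsing onto the corner $(R,n)$.
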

\begin{proof}
By Fact~\ref{fct:ABr} and Lemma~\ref{lem:qtr_thin} it suffices to show that the total number of distinct thick quartics in $A$ is $\cO(n^2 \log^2n)$.
Let us fix non-negative integers $a,b$.
It is enough to show that the number of distinct subarrays of $A$ of the form $W^{\alpha,\beta}$ for even $\alpha,\beta > 2$ and any primitive $(a,b)$-array $W$ is $\cO(n^2)$.

The sets of special points have the following properties. \cref{cl:spsp} follows from Corollary~\ref{cor:per2D}.

\begin{claim}\label{cl:spsp}
For primitive $(a,b)$-arrays $W \ne W'$ ,
$\SP(W) \cap \SP(W')=\emptyset$.
\end{claim}

For an array $W$, let us denote by $\Thick(W)$ the total number of thick quartics in $A$ with primitive root $W$.

\begin{claim}
For a primitive $(a,b)$-array $W$, $\Thick(W) < |\SP(W)|$.
\end{claim}
\begin{proof}
For each $\alpha =4,6,\ldots$ in this order, we select one positioned h-maximal $W$-quartic $U_{\alpha}$ of height $\alpha\cdot \height(W)$. The number of distinct $W$-quartics in $A$ of 
height $\alpha\cdot \height(W)$ is at most the number of special points in $U_\alpha$ in 
any of its rows. Note that this statement also holds if $U_\alpha = W^{\alpha,4}$; then there are still four special points in each (non-middle if $\alpha=4$) row.

We describe a process of assigning distinct $W$-quartics to distinct special points in the set $\SP(W)$.
Assume all points in this set are initially not marked. We choose any   single row 
from $U_\alpha$
with all special points in this row still not marked. Then we mark all these special points.
We can always choose a suitable row because the heights are increasing.

This way each $W$-quartic is assigned to only one special point from $\SP(W)$.
\end{proof}

By the claims, the total number of thick $W$-quartics for primitive $(a,b)$-arrays $W$ is bounded by:
\[\sum_{W} \Thick(W) < \sum_W |\SP(W)| \le 4(n+1)^2,\]
where the sum is over all primitive $(a,b)$-arrays $W$. The conclusion follows.
\end{proof}

\section{Algorithms for Computing Quartics}\label{sec:apps}
In this section we show algorithmic applications of 2D-runs related to quartics.

\begin{theorem}\label{thm:occ_prim}
All occurrences of primitively rooted quartics in an $n \times n$ array $A$ can be computed in the optimal $\cO(n^2\log^2 n)$ time.
\end{theorem}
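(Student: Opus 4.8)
The plan is to reuse, at the level of individual occurrences, the horizontal-run machinery developed for \cref{thm:main}. First observe that primitivity pins down the periods of an occurrence: if $Q=W^{2,2}$ is an occurrence of a primitively rooted quartic then, exactly as in the argument for \cref{lem:per2D}, the Periodicity Lemma (\cref{lem:FW}) together with the primitivity of $W$ forces $\hper(Q)=\width(W)$ and $\vper(Q)=\height(W)$. Hence each occurrence has a well-defined period pair $(p,q)$ and lives in a $2q\times 2p$ window. By \cref{fct:ABr} there are only $\cO(n^2\log^2 n)$ such occurrences, so the stated running time is both best possible (optimality, via the matching lower bound) and, up to duplicates, the target output size.

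The naive route fails badly: if we compute all 2D-runs with \cref{cor:runs} and, for every run $R$ with periods $(p,q)$, list the induced quartics at all positions in $[i_1,i_2-2q+1]\times[j_1,j_2-2p+1]$, then by \cref{obs:ind_qu} a single occurrence (e.g.\ the central $a^{2,2}$ of \cref{fig:walen}) is induced by $\Theta(n^2)$ overlapping runs of the same small period; already a triangular all-$a$ region makes the total number of (run, induced occurrence) pairs $\Theta(n^3)$. Thus the main obstacle is multiplicity control: each occurrence must be emitted only $\cO(1)$ times.

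My plan is to anchor occurrences to horizontal runs rather than to 2D-runs, so that the unique-extension property of 1D runs bounds the multiplicity. For each $k\in[0,\lfloor\log n\rfloor]$ and each $i$ I build the meta-string $H^k_i$ (as in \cref{sec:runs}), whose symbols are 2D-DBF identifiers of height-$2^k$ column strips; this costs $\cO(n^2\log n)$ overall. In each $H^k_i$ I compute all primitively rooted square occurrences by a linear-time runs algorithm, emitting for every run $U$ its $|U|-2\per(U)+1$ start positions; each such position lies in a unique run, so there are no duplicates, and by \cref{lem:1d} their number is $\cO(n\log n)$ per string, hence $\cO(n^2\log^2 n)$ in total. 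A quartic of height $2q$ with $q\in[2^{k-1},2^k)$ satisfies $q<2^k\le 2q$, so its top (resp.\ bottom) $2^k$-tall strip already contains a full vertical period; exactly as in the proof of \cref{lem:align} this makes the strip's smallest horizontal period equal to $p$ and transfers the primitivity of the root, so the occurrence is top- or bottom-anchored to a primitively rooted square of half-length $p$ in some $H^k_i$. Crucially, anchoring fixes the band index $i$ to the quartic's top (or bottom) row, so there is no vertical range to scan.

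It then remains, for each emitted 1D square occurrence (which fixes $i$, the columns $[j\dd j+2p-1]$, and hence $p$), to recover $q$. Applying \cref{lem:3vertical} to the at most $2^{k+1}-1$ rows starting at $i$ over column-range $[j\dd j+2p-1]$ bounds the candidate vertical periods by $\cO(1)$; for each candidate I verify in $\cO(1)$ time, using 2D-DBF quadruples, that $A[i\dd i+2q-1,\,j\dd j+2p-1]=W^{2,2}$ and that $W$ is primitive (i.e.\ both its row- and its column-meta-string are primitive), and if so emit the occurrence. Each occurrence is produced $\cO(1)$ times—at most once per anchor, and its height fixes $k$—so a final radix sort on the top-left corners removes the residual duplicates within the budget. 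The hard part is exactly this multiplicity control: it is the anchoring of \cref{lem:align}, which ties the row to the band and eliminates the vertical blow-up, combined with the $\cO(1)$ vertical-period bound of \cref{lem:3vertical}, that turns the $\Theta(n^3)$ naive count into $\cO(n^2\log^2 n)$.
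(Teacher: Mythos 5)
Your anchoring scheme is sound as far as it goes: enumerating, via the runs of each meta-string $H^k_i$, all occurrences of primitively rooted squares is complete and duplicate-free, \cref{lem:1d} caps their number at $\cO(n^2\log^2 n)$, and primitivity of the quartic's root does transfer to the root of the strip square (and back). The genuine gap is in the step ``recover $q$''. \cref{lem:3vertical} is a purely combinatorial statement: it says at most two heights can work, but it is not a procedure for \emph{finding} them. For an anchor $(k,i,j,p)$ the a priori candidates are all $q$ with $2q\in[2^k,2^{k+1})$, i.e.\ $\Theta(2^k)$ values, and nothing in your proposal extracts the $\cO(1)$ valid ones without scanning this range; scanning costs $\Theta(2^k)$ tests per anchor, which already for the all-$a$ array sums to $\Theta(n^3)$, far above the claimed budget. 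A second (patchable) soft spot: your ``$\cO(1)$ primitivity check via 2D-DBF quadruples'' is unsupported, since 2D-DBF identifiers decide equality of subarrays, not primitivity of a root; this can be repaired (horizontal primitivity is inherited from the anchor being a primitively rooted square, and vertical primitivity is automatic when the candidate $q$ is the minimal vertical period), but only once the candidates themselves are in hand.

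The most natural repair within your framework is to symmetrize: also enumerate \emph{vertical} anchors (primitively rooted squares of half-length $q$ starting at row $i$ in meta-strings of width-$2^l$ row strips), bucket both anchor sets by their corner $(i,j)$ via radix sort, and join: a horizontal anchor $(k,i,j,p)$ is matched only with vertical anchors at $(i,j)$ whose half-length $q$ satisfies $2q\in[2^k,2^{k+1})$ and whose band is $\floor{\log 2p}$; by the Three Squares Lemma (\cref{lem:3sq}) each horizontal anchor has at most two matches, and each matched pair is verified by comparing the four quadrants with 2D-DBF, with primitivity then coming for free from the two anchors. This yields the claimed bound. For contrast, the paper goes precisely the route you dismissed as ``naive'': it computes all 2D-runs (\cref{cor:runs}), groups them by the exact period pair $(p,q)$ into families of rectangles of admissible top-left corners, and resolves the multiplicity problem you correctly identified (\cref{obs:ind_qu}) not by avoiding 2D-runs but by reporting the \emph{union} of each family of rectangles with a dedicated line sweep (\cref{clm:KleeReport}) in $\cO(n+r+\textsf{output})$ time, where $r=\cO(n^2\log^2 n)$ by \cref{thm:main} and $\textsf{output}=\cO(n^2\log^2 n)$ by \cref{fct:ABr}. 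So the deduplication you thought fatal is exactly the technical contribution of the paper's proof.
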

\begin{proof}
Let us consider a 2D-run $R=A[i_1 \dd i_2,j_1 \dd j_2]$ with periods $\hper(R)=p$ and $\vper(R)=q$. It induces primitively rooted quartics of width $2p$ and height $2q$. The set of top-left corners of these quartics forms a rectangle $\hat{R}=[i_1,i_2-2p+1]\times [j_1,j_2-2q+1]$. We denote by $\F_{p,q}$ the family of such rectangles $\hat{R}$ over 2D-runs $R$ with the same periods $p,q$.

Such rectangles for different 2D-runs may overlap, even when the dimensions of the quartic are fixed (see Observation~\ref{obs:ind_qu}). In order not to report the same occurrence multiple times, we need to compute, for every dimensions of a quartic, all points in the union of the corresponding rectangles. 
This could be done with an additional $\log n$-factor in the complexity using a standard line sweep algorithm~\cite{Bentley}. However, we can achieve $\Oh(n^2 \log^2 n)$ total time using the fact that the total number of occurrences reported is $\cO(n^2\log^2 n)$.

\begin{claim}\label{clm:KleeReport}
Let $\F_1,\ldots,\F_k$ be families of 2D rectangles in $[1,n]^2$ and let $r=\sum_{i=1}^k |\F_i|$. We
can compute $k$ (not necessarily disjoint) sets of grid points $\mathit{Out}_i=\bigcup \F_i$  in $\cO(n+r+\textsf{output})$ total time, where $\textsf{output}=\sum_i\,|\mathit{Out}_i|$ is the total number of reported points.
\end{claim}
\begin{proof}
We design an efficient line sweep algorithm. We will perform a separate line sweep, left to right, for each family $\F_i$.

The sweep goes over horizontal ($x$) coordinates in a left-to-right manner. The broom stores vertical ($y$) coordinates of horizontal sides of rectangles that it currently intersects. They are stored in a sorted list $L$ of pairs $(y,c)$, where $y$ is the coordinate, and $c$ is the count of rectangles with bottom side at coordinate $y$ minus the count of the rectangles with top side at coordinate $y$. Only pairs with non-zero second component are stored. Clearly, the second components of the list elements always sum up to $0$.

A coordinate  $x$ is processed if $L$ is non-empty before accessing it or there exist any vertical sides of rectangles at $x$. All vertical sides with the same $y$-coordinate are processed in a batch. For every such batch we want to guarantee that endpoints of all sides are stored in a list $B$ in a top-down order.

A top (bottom) endpoint at vertical coordinate $y$ is stored as $(y,+1)$ ($(y,-1)$, respectively).

Let us now describe how to process a horizontal coordinate $x$. Let us merge the list $L$ that is currently in the broom with the list $B$ of the batch by the first components. If there is more than one pair with the same first component, we merge all of them together, summing up the second components. 
Let us denote by $L'$ the resulting list. We iterate over all elements of $L'$, keeping track of the partial sum of second components, denoted as $s$. For every element $(y,c)$ of $L'$, the point $(x,y)$ is reported for $\bigcup \F_i$. Moreover, if the partial sum $s$ before considering $c$ was positive and the previous element of $L'$ is $(y',c')$, all points $(x,y'+1),\ldots,(x,y-1)$ are reported to $\mathit{Out}_i$. 
Finally, all pairs with second component equal to zero are removed from $L'$ which becomes the new list $L$.

Let us now analyze the complexity of the algorithm. The line sweep makes $n$ steps. The total size of lists $B$ across all families $\F_i$ is $\cO(r)$ and they can be constructed simultaneously in $\cO(n+r)$ time via bucket sort. 
Processing a batch with list $B$ takes $\cO(|L|+|B|)$ time plus the time to report points in $\mathit{Out}_i$. As we have already noticed, the sum of $\cO(|B|)$ components is $\cO(r)$. For every element $(y,c)$ of the initial list $L$, a point with the vertical coordinate $y$ is reported upon merging; hence, the sum of $\cO(|L|)$ components is dominated by $\cO(\textsf{output})$. Overall we achieve time complexity $\cO(n+r+\textsf{output})$.
\end{proof}

We apply~\cref{clm:KleeReport} to the families $\F_{p,q}$. Then $r$ and $\textsf{output}$ are upper bounded by $\cO(n^2 \log^2 n)$ by \cref{thm:main} and \cref{fct:ABr}, respectively. 
The optimality of our algorithm's complexity is due to the $\Omega(n^2 \log^2 n)$ lower bound on the maximum number of occurrences of primitively rooted quartics from~\cite{DBLP:journals/tcs/ApostolicoB00}.
\end{proof}

We proceed to an efficient algorithm for enumerating distinct, not necessarily primitively rooted, quartics using 2D-runs. The solution for an analogous problem for 1-dimensional strings (computing distinct squares from runs) uses Lyndon roots of runs~\cite{DBLP:journals/tcs/CrochemoreIKRRW14}. However, in 2 dimensions it is not clear if a similar approach could be applied efficiently, say, with the aid of 2D Lyndon words~\cite{DBLP:journals/algorithmica/MarcusS17} as Lyndon roots of 2D-runs. We develop a different approach in which the workhorse is the following auxiliary problem related to the folklore nearest smaller value problem.

Let us consider a grid of height $m$ in which every cell is either black or white. We say that the grid forms a 
\emph{staircase} if the set of white cells in each row is nonempty and is a prefix of this row
(see \cref{fig:staircase}).
A staircase can be uniquely determined by an array $\Whites[1 \dd m]$ such that $\Whites[i]$ is the number of white cells in the $i$th row. 
We consider \emph{shapes} of white rectangles. Each shape is a pair $(p,q)$ that represents the dimensions of the rectangle.
These shapes (and the corresponding rectangles) are partially ordered by the relation:
\[(p,q) < (p',q')\ \Leftrightarrow \ (p,q)\ne (p',q')\ \land\  p\le p'\ \land\ q\le q'.\]
Let us now consider the following problem.

\defproblem{Max White Rectangles}{An array $\Whites[1 \dd m]$ that represents a staircase.}{Shapes of all maximal white rectangles in this staircase.}

\begin{lemma}
\label{lem:MWR}
\textsc{Max White Rectangles} problem can be solved in $\cO(m)$ time.
\end{lemma}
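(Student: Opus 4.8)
The plan is to reduce \textsc{Max White Rectangles} to the classical problem of enumerating all maximal rectangles that fit under a histogram, and then to solve the latter with a nearest-smaller-value sweep. The staircase described by $\Whites[1\dd m]$ is exactly the region lying under the histogram whose $i$-th (horizontal) bar has length $\Whites[i]$, and white rectangles correspond precisely to axis-aligned rectangles contained in this region. First I would observe that every maximal white rectangle must have its left edge in the first column: if a white rectangle occupied columns $[c,c']$ with $c>1$, then, since the white cells of every row form a prefix, column $c-1$ would be white in all of its rows and the rectangle could be extended leftwards, contradicting maximality. Hence a maximal white rectangle is determined by a row interval $[i,j]$ together with its width $w=\min_{i\le k\le j}\Whites[k]$, and its maximality amounts to $\Whites[i-1]<w$ and $\Whites[j+1]<w$ (treating out-of-range entries as $0$, which is a valid sentinel because every $\Whites[i]\ge 1$). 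These are exactly the maximal rectangles of the histogram, whose shape I report as $(j-i+1,\,w)$.

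Next I would run the folklore $\NSV$ computation: with a single monotone-stack scan in each direction I compute, for every index $k$, the nearest index $L[k]<k$ with $\Whites[L[k]]<\Whites[k]$ and the nearest index $R[k]>k$ with $\Whites[R[k]]<\Whites[k]$, both in $\cO(m)$ time. For each $k$ the triple $(L[k],R[k],\Whites[k])$ yields the rectangle spanning rows $L[k]+1,\ldots,R[k]-1$ of width $\Whites[k]$, i.e.\ the widest white rectangle whose limiting (minimum) width is attained at row $k$; this rectangle is maximal by the characterisation above. Reporting the shape $(R[k]-L[k]-1,\,\Whites[k])$ at each index then enumerates every maximal white rectangle, and since each index is processed in $\cO(1)$ the whole sweep runs in $\cO(m)$ time.

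The step that needs the most care is ensuring that each maximal rectangle is produced exactly once: a plateau of equal values, or a minimum width attained at several non-adjacent rows, maps several indices $k$ to the very same rectangle, so a naive per-index emission over-reports. I would resolve this with the standard strict/non-strict tie-breaking in the two $\NSV$ directions, emitting a rectangle only from a canonical witnessing index of its minimum, and then argue a bijection between the emitted rectangles and the containment-maximal white rectangles (in particular bounding their number by $\cO(m)$). This bookkeeping, rather than the sweep itself, is the main obstacle; once it is settled, correctness follows from the reduction above and the running time is clearly $\cO(m)$.
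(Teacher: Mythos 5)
There is a genuine gap, and it concerns what ``maximal'' means in this problem. The paper defines maximality with respect to the partial order on \emph{shapes}, $(p,q)<(p',q')\Leftrightarrow (p,q)\ne(p',q')\,\land\,p\le p'\,\land\,q\le q'$, so the required output is the antichain of shapes that are not dominated by the shape of \emph{any other} white rectangle in the staircase. Your proof instead enumerates the \emph{containment-maximal} rectangles (those that cannot be extended by a row or column), which is a strictly larger set. Concretely, take $\Whites=[3,3,1,3]$: the containment-maximal rectangles are rows $1$--$2$ with width $3$ (shape $(2,3)$), rows $1$--$4$ with width $1$ (shape $(4,1)$), and row $4$ with width $3$ (shape $(1,3)$). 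Your sweep reports all three shapes, but $(1,3)<(2,3)$ in the dominance order, so the correct output is only $\{(2,3),(4,1)\}$. This matters for how the lemma is used later in the paper: the set $\mathit{MaxPowers}_W(H)$ must be an antichain (sorted by increasing height it has decreasing widths), and the enumeration of all quartics from it relies on that structure.

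The fix is small and is exactly the paper's second phase, which your write-up is missing: after the NSV sweep, instead of emitting a rectangle per index, record candidates in an array $\mathit{MaxWidth}[1\dd m]$ by setting $\mathit{MaxWidth}[h]:=\max(\mathit{MaxWidth}[h],\Whites[k])$ where $h$ is the height of the rectangle witnessed at index $k$; then scan $h=m$ down to $1$ keeping a running maximum $\mathit{mw}$ of widths seen so far, and report $(h,\mathit{MaxWidth}[h])$ only if $\mathit{MaxWidth}[h]>\mathit{mw}$. Since a white rectangle of any shape extends to a containment-maximal one that dominates it, every dominance-maximal shape appears among the candidates, and the downward scan removes precisely the dominated ones; everything is $\cO(m)$. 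A side remark: once you aggregate by height via a maximum, the tie-breaking/bijection machinery you devote your last paragraph to becomes unnecessary --- duplicate emissions of the same rectangle (plateaus, repeated minima) are harmlessly absorbed by the $\max$, which is why the paper's first phase does no deduplication at all. Up to that point your reduction and NSV sweep coincide with the paper's first phase (\textbf{ComputeCandidates}); only the dominance filter separates your argument from a correct proof.
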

\newcommand{\NSVUp}{\mathit{NSVUp}}
\newcommand{\NSVDown}{\mathit{NSVDown}}
\newcommand{\MaxWidth}{\mathit{MaxWidth}}
\newcommand{\mw}{\mathit{mw}}

\begin{proof}
Assume that $\Whites[0]=\Whites[m+1]=-1$. Let us define two tables of size $m$:
\begin{align*}\NSVUp[i] &= \max\{ j\, :\, j < i,\, \Whites[j] < \Whites[i] \},\\
\NSVDown[i] &= \min\{ j\, :\, j > i,\, \Whites[j] < \Whites[i] \}.
\end{align*}

They can be computed in $\cO(m)$ time by a folklore algorithm for the nearest smaller value table; see e.g.~\cite{DBLP:journals/jal/BerkmanSV93}.
Then the problem can be solved as in Algorithm 1 presented below. 
After the first for-loop, 
for each maximal white rectangle $R$ we have $\MaxWidth[\height(R)]=\width(R)$, but we 
could have redundant values for non-maximal rectangles. In order to filter out non-maximal rectangles, we process the candidates by decreasing height and remove
the ones that are dominated by the previous maximal rectangle in the partial order of shapes.\end{proof}

\begin{algorithm}[htpb]
\DontPrintSemicolon
{\bf ComputeCandidates:}\;\vspace*{0.1cm}
  $\MaxWidth[1 \dd m]:=(0,\ldots,0)$\;
  \For{$i:= 1$ \KwSty{to} $m$}{
    $h:=\NSVDown[i] - \NSVUp[i] - 1$\;
    $\MaxWidth[h]:=\max(\MaxWidth[h],\Whites[i])$\;
  }
  \vspace*{0.1cm}
{\bf ReportMaximal:}\;
  $\mw:=0$\;
  \For{$h:=m$ \KwSty{down\ to} $1$}{
    \If{$\MaxWidth[h]>\mw$}{
      Report the shape $(h,\MaxWidth[h])$\;
      $\mw:=\MaxWidth[h]$\;
    }
  }
  \;
\caption{The first phase  computes a set of shapes of type  $(h,\MaxWidth[h])$, at most one for each height $h$; see also \cref{fig:staircase}.
In the second phase only inclusion-maximal shapes from this set are reported.
}\label{alg:NSV2}
\end{algorithm}

\begin{figure}[htpb!]
    \centering
\begin{tikzpicture}[scale=0.39]
    \filldraw[black!20!white] (0,1) rectangle (9,8);
    \foreach \y/\x in {1/3,2/7,3/6,4/7,5/8,6/6,7/2}{
        \filldraw[white] (0,\y) rectangle (\x,\y+1);
    }
    \foreach \y/\x in {1/3,2/7,3/6,4/7,5/8,6/6,7/2}{
        \draw (0,\y) grid (\x,\y+1);
    }
    \draw[very thick,black](0,2) rectangle (6,7);
    \draw (0,3.5) node[left] (i) {$i$};
    \draw[-latex] (i) .. controls (-1.5,4.5) and (-1.5,6.5) .. node[left] {$\NSVUp[i]$} (-0.3,7.5);
    \draw[-latex] (i) .. controls (-1,3) and (-1,2) .. node[left] {$\NSVDown[i]$} (-0.3,1.5);
    \draw (0,1) rectangle (9,8);
\end{tikzpicture}
    \caption{A maximal white rectangle containing row $i$ is computed using the NSV tables for $i$.}
    \label{fig:staircase}
\end{figure}
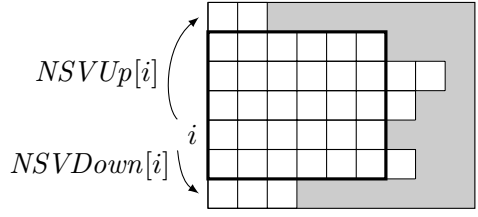

\begin{remark}
Note that the total area (and width) of a staircase can be large but the complexity of our algorithm is linear with respect to the number of rows, thanks to the small representation (array $\Whites$).
\end{remark}

Now our approach is graph-theoretic. The graph nodes correspond to occurrences of primitively rooted quartics.
For a fixed primitively rooted quartic $W^{2,2}$ we consider the graph $G_W=(V,E)$, where $V$ is the set of top-left corners of occurrences of $W^{2,2}$. Let $r=\height(W)$ and $c=\width(W)$. The edges in $G$ connect vertex $(i,j)$ with vertices $(i\pm r,j)$ and $(i,j\pm c)$, if they exist. See also \cref{fig:abc}.
This graph can be efficiently computed since we know its nodes due to Theorem~\ref{thm:occ_prim}. 

\begin{figure}[htpb]
    \centering
\begin{tikzpicture}[scale=0.45]
\begin{scope}[xshift=-5cm,yshift=3cm]
\foreach \x/\y/\c in {0/0/b,0/1/c,0/2/a, 1/0/c,1/1/a,1/2/b, 2/0/a,2/1/b,2/2/c}{
    \draw(\x,\y) node[above] {$\c$};
}
\draw[xshift=-0.5cm,yshift=-0.1cm,] (0,0) rectangle (3,3);
\draw (1,-0.1) node[below] {$W$};
\end{scope}
\begin{scope}
\clip (-0.5,0) rectangle (13.5,10);
\foreach \dx in {0,3,6,9,12}{
    \foreach \dy in {0,3,6}{
        \foreach \x/\y/\c in {0/0/b,0/1/c,0/2/a, 1/0/c,1/1/a,1/2/b, 2/0/a,2/1/b,2/2/c}{
            \draw[xshift=\dx cm,yshift=\dy cm] (\x,\y) node[above] {$\c$};
        }
    }
}
\end{scope}
\draw[xshift=-0.5cm,yshift=-0.1cm,step=3cm,blue] (0,0) grid (12,9);
\draw[xshift=0.5cm,yshift=1.9cm,step=3cm,thick,brown] (0,0) grid (12,6);
\draw[xshift=1.5cm,yshift=0.9cm,step=3cm,thick,green!50!black] (0,0) grid (12,6);
\foreach \x/\y in {0/9,3/9,6/9, 0/6,3/6,6/6}{
    \filldraw[xshift=-0.5cm,yshift=-0.1cm,blue] (\x,\y) circle (0.15cm);
}
\foreach \x/\y in {1/8,4/8,7/8}{
    \filldraw[xshift=-0.5cm,yshift=-0.1cm,brown] (\x,\y) circle (0.15cm);
}
\foreach \x/\y in {2/7,5/7,8/7}{
    \filldraw[xshift=-0.5cm,yshift=-0.1cm,green!50!black] (\x,\y) circle (0.15cm);
}
\draw (-1.5,0.5) node {$A$};
\begin{scope}[xshift=18cm,yshift=-0.5cm]
\draw (3,0.5) node {$G_W$};
\draw[blue] (0,6) rectangle (6,9)  (3,9) -- (3,6);
\foreach \x/\y in {0/9,3/9,6/9, 0/6,3/6,6/6}{
    \filldraw[blue] (\x,\y) circle (0.15cm);
}
\begin{scope}[yshift=-4cm,xshift=-1cm]
\draw[orange] (1,8) -- (7,8);
\foreach \x/\y in {1/8,4/8,7/8}{
    \filldraw[brown] (\x,\y) circle (0.15cm);
}
\end{scope}
\begin{scope}[yshift=-5cm,xshift=-2cm]
\draw[green!50!black] (2,7) -- (8,7);
\foreach \x/\y in {2/7,5/7,8/7}{
    \filldraw[green!50!black] (\x,\y) circle (0.15cm);
}
\end{scope}
\end{scope}
\end{tikzpicture}
    \caption{Graph $G_W$ has 12 vertices that form two components with 3 vertices each (green and brown) and one component with 6 vertices (blue). Note the non-trivial occurrences of $W$ in $W^{3,4}$.
}
    \label{fig:abc}
\end{figure}
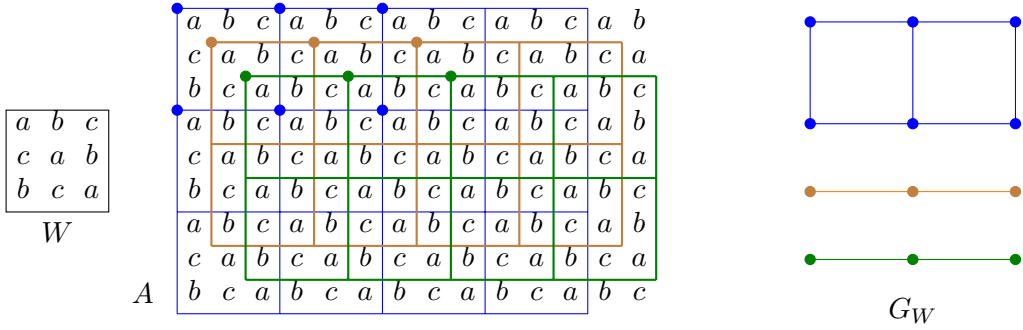

\begin{lemma}\label{lem:GW}
All graphs $G_W$, and their connected components, for all $W$ which are
primitive roots of quartics in $A$ can be constructed in $\cO(n^2\log^2n)$ time.
\end{lemma}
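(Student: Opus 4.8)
The plan is to build every $G_W$ in three phases — vertices, then edges, then connected components — and argue that each phase runs in $\cO(n^2\log^2 n)$ time. The observation that makes the budget work is that the total number of vertices over all graphs $G_W$ equals the total number of occurrences of primitively rooted quartics in $A$ (every vertex is the top-left corner of an occurrence of a primitively rooted quartic $W^{2,2}$, and grouping these by their primitive root $W$ exactly partitions them among the $G_W$), which is $\cO(n^2\log^2 n)$ by \cref{fct:ABr}. Since every vertex $(i,j)$ has at most four potential neighbours, $(i\pm r,j)$ and $(i,j\pm c)$, the total number of edges is within the same bound.

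First I would invoke \cref{thm:occ_prim} to compute all occurrences of primitively rooted quartics in $\cO(n^2\log^2 n)$ time. Each reported occurrence comes with its top-left corner $(i,j)$ and its dimensions $2\height(W)\times 2\width(W)$, so $r=\height(W)$ and $c=\width(W)$ are immediately known. Using the 2D-DBF quadruple scheme described in \cref{sec:prelim}, I would attach to each occurrence, in $\cO(1)$ time, an identifier consisting of its dimensions together with the 2D-DBF quadruple of the subarray $A[i\dd i+2r-1,\,j\dd j+2c-1]$. Two occurrences get equal identifiers if and only if they are occurrences of the same quartic $W^{2,2}$, equivalently of the same primitive root $W$. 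As each identifier is a constant-length tuple of integers from $[0,n^2]$, a radix sort groups all occurrences by their root $W$ in $\cO(n^2\log^2 n)$ time; each group is precisely the vertex set of one $G_W$.

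Next I would construct the edges of all $G_W$ simultaneously. Within the group of a fixed $W$ (for which $r$ and $c$ are fixed), a vertical edge joins $(i,j)$ and $(i+r,j)$ whenever both lie in the group. I would detect all such pairs by sorting the whole collection by the key $(\text{identifier},\,j,\,i)$: inside each maximal block of equal $(\text{identifier},j)$ the row coordinates are sorted, and a two-pointer scan with both pointers advancing monotonically reports every pair whose rows differ by exactly $r$ in time linear in the block size. Horizontal edges are found symmetrically by sorting by $(\text{identifier},\,i,\,j)$ and matching column coordinates differing by $c$. Both sorts are radix sorts over constant-length bounded keys, hence $\cO(n^2\log^2 n)$ time, and the scans are linear in the vertex count, yielding adjacency lists of total size $\cO(n^2\log^2 n)$. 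Finally I would obtain the connected components of every $G_W$ by a single breadth-first search over the disjoint union of all these graphs, in time linear in the number of vertices and edges, i.e.\ $\cO(n^2\log^2 n)$.

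The hard part is the edge construction: naively testing, for each vertex, whether a prospective neighbour is also an occurrence of the \emph{same} quartic risks a logarithmic-factor overhead per test. The key is that grouping by identifier through radix sort, followed by the monotone two-pointer scan, reduces every neighbour lookup to amortized $\cO(1)$, so that the construction — and therefore the whole lemma — stays within the quasi-linear bound already dictated by \cref{fct:ABr} and \cref{thm:occ_prim}.
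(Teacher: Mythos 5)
Your proposal is correct and follows essentially the same route as the paper: compute all occurrences via Theorem~\ref{thm:occ_prim} (bounded by Fact~\ref{fct:ABr}), group them by primitive root using 2D-DBF identifiers and radix sort to get the vertex sets, then build edges and connected components in time linear in the number of occurrences. The only difference is an implementation detail in the edge phase: the paper detects neighbours with a reusable $n\times n$ Boolean lookup array $D$ (cleared after each $W$), whereas you use additional radix sorts and monotone two-pointer scans --- both yield amortized constant-time neighbour detection within the same $\cO(n^2\log^2 n)$ budget.
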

\begin{proof}
We first compute all occurrences of primitively rooted quartics in $A$ using Theorem~\ref{thm:occ_prim}. By Fact~\ref{fct:ABr}, there are $\cO(n^2 \log^2 n)$ of them in total. 

We can assign 2D-DBF identifiers (quadruples) to each of the occurrences and group the occurrences by distinct primitively rooted quartics via radix sort in $\cO(n^2\log^2 n)$ time. This gives us the vertices of $G_W$.

To compute the edges, we use an auxiliary $n \times n$ Boolean array $D$ that will store top-left corners of occurrences of each subsequent primitively rooted quartic $W^{2,2}$. 

Initially $D$ is set to zeroes and after each $W$, all cells with ones are zeroed in $\cO(|G_W|)$ time. Using this array and the positions of occurrences of $W^{2,2}$, the edges of $G_W$ can be computed in $\cO(|G_W|)$ time. It also allows to divide $G_W$ into connected components via graph search in $\cO(|G_W|)$ time.
\end{proof}

\begin{theorem}
All distinct quartics in an $n \times n$ array $A$ can be computed in $\cO(n^2\log^2 n)$ time.
\end{theorem}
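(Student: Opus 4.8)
The goal is to compute all distinct quartics (of the form $W^{\alpha,\beta}$ for primitive $W$ and even $\alpha,\beta\ge 2$) in $\cO(n^2\log^2 n)$ time. The overall strategy is to reduce the enumeration of distinct quartics with a fixed primitive root $W$ to a problem about white rectangles in staircases, using the graph $G_W$ and \cref{lem:MWR}. First I would invoke \cref{lem:GW} to construct, in $\cO(n^2\log^2 n)$ total time, all the graphs $G_W$ together with their connected components, where the vertices of $G_W$ are the top-left corners of occurrences of $W^{2,2}$ and the edges link corners at offset $(\pm r,0)$ and $(0,\pm c)$ for $r=\height(W),c=\width(W)$. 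A quartic $W^{\alpha,\beta}$ occurs with top-left corner at a node $v$ precisely when, starting from $v$, the grid of nodes extends $\alpha/2$ steps down and $\beta/2$ steps right within a single connected component; hence distinct quartics with root $W$ correspond to distinct ``rectangular grids of nodes'' that can be anchored somewhere in $G_W$.

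The key observation is that within one connected component of $G_W$, the reachable rectangular blocks of vertices hanging from each possible top-left anchor form exactly a staircase shape, once we record for each anchor how far right the consecutive nodes extend. I would therefore, for each component, identify the maximal rectangular grids of present nodes and encode them via the $\Whites$ array of \cref{lem:MWR}: the shape of a maximal white rectangle $(p,q)$ translates into a maximal quartic $W^{2p,2q}$, and every distinct quartic $W^{\alpha,\beta}$ with root $W$ is dominated (in the partial order $(p,q)\le(p',q')$) by exactly one such maximal shape. Running the \textsc{Max White Rectangles} algorithm on each staircase yields, in time linear in the number of rows, the inclusion-maximal shapes; the set of all distinct quartics with root $W$ is then the downward closure of these maximal shapes, which can be enumerated without recomputation by scanning the reported shapes by decreasing height.

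The main technical obstacle is bookkeeping to guarantee that each distinct quartic is counted exactly once across the whole array, and that the total running time telescopes to $\cO(n^2\log^2 n)$. A quartic $W^{\alpha,\beta}$ may occur in many places and even be induced by $\Theta(n^2)$ 2D-runs (\cref{obs:ind_qu}), so I must be careful to define the staircase representation per component so that each combinatorial shape $(\alpha/2,\beta/2)$ is produced once per root $W$, not once per occurrence. Concretely, for each root $W$ I would collect over all components the set of maximal shapes, deduplicate them (by radix sort on the pair $(\alpha,\beta)$ together with the 2D-DBF identifier of $W$), and take the union of their downward closures; since the number of distinct quartics is $\cO(n^2\log^2 n)$ by \cref{thm:dist_q}, and the work per component is linear in its number of rows (bounded by the number of its vertices), the grand total over all $W$ telescopes against the $\cO(n^2\log^2 n)$ bound on the number of occurrences of primitively rooted quartics from \cref{fct:ABr}.

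Finally, I would combine the thick and non-primitively-rooted quartics obtained this way with the primitively rooted quartics computed directly via \cref{thm:occ_prim}, and the thin quartics (handled as in \cref{lem:qtr_thin}), performing one last global radix sort on 2D-DBF identifiers to remove any duplicates across different roots. The correctness of the reduction hinges on the staircase property of node-reachability within a component, and the efficiency hinges on \cref{lem:MWR} being linear in the number of rows rather than in the (possibly large) total area, so that the cost is charged only against the vertices actually present in each $G_W$.
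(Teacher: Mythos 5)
Your overall architecture matches the paper's: build the graphs $G_W$ and their connected components via \cref{lem:GW}, reduce the enumeration of powers of a fixed primitive root $W$ inside a component to \textsc{Max White Rectangles} (\cref{lem:MWR}), keep only maximal shapes, enumerate downward closures without repetition, and charge the total work against \cref{fct:ABr} and \cref{thm:dist_q}. However, two of your key steps are wrong as stated. The first is the translation between node-grids and quartic exponents. With $r=\height(W)$, $c=\width(W)$, a quartic $W^{\alpha,\beta}$ (even $\alpha,\beta$) occurs at $(i,j)$ if and only if $W$ occurs at all $(i+xr,j+yc)$ with $0\le x<\alpha$, $0\le y<\beta$, equivalently if and only if nodes (occurrences of $W^{2,2}$) are present at $(i+xr,j+yc)$ for all $0\le x\le\alpha-2$, $0\le y\le\beta-2$ --- an $(\alpha-1)\times(\beta-1)$ grid of nodes, not one extending ``$\alpha/2$ steps down and $\beta/2$ steps right''. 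Consequently a maximal $p\times q$ all-white rectangle of \emph{nodes} corresponds to the maximal power $W^{p+1,q+1}$, not to the quartic $W^{2p,2q}$: a component whose nodes form a $3\times 3$ grid witnesses exactly the power $W^{4,4}$, whereas your translation would report $W^{6,6}$, which need not occur. The paper avoids this off-by-one trap by changing granularity: each node $(i,j)$ contributes the four cells $(\lfloor i/r\rfloor+\delta,\lfloor j/c\rfloor+\epsilon)$, $\delta,\epsilon\in\{0,1\}$, so that white cells are occurrences of $W$ and a white rectangle of shape $(p,q)$ corresponds exactly to the power $W^{p,q}$; quartics are then the even-exponent shapes dominated by maximal powers. (Your claim that each quartic is dominated by \emph{exactly one} maximal shape is also false --- downward closures of distinct maximal shapes overlap --- but your decreasing-height scan, like the paper's partition $\alpha_{p-1}<2\alpha\le\alpha_p$, $1\le 2\beta\le\beta_p$, repairs that.)

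The second gap is the ``staircase property of node-reachability within a component'', on which you say correctness hinges: it does not hold. The cell set of a connected component of $G_W$ is an arbitrary rook-connected set of grid cells (an L-shape opening to the left, a ring, etc.), so a component cannot in general be encoded by a single $\Whites$ array. The missing step --- which is exactly what the paper's Algorithm for $\mathit{MaxPowers}_W(H)$ supplies --- is a decomposition: compute for every white cell its rightward extent $R[i,j]=\min\{p\ge 0: (i,j+p)\notin S\}$, and for every maximal vertical run of white cells in a fixed column $j$ feed $\Whites[1\dd m]:=R[i\dd i+m-1,j]$ into \cref{lem:MWR}; since every maximal white rectangle has its left edge in some column, the union over all columns (followed by a global filtering pass) yields all maximal powers, in time linear in $|S|\le 4|V(H)|$. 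Your phrase about ``recording for each anchor how far right the consecutive nodes extend'' gestures at this array, but the argument as written rests on components literally being staircases, which is false. Finally, your separate passes are redundant: handling thin quartics ``as in \cref{lem:qtr_thin}'' is not available (that lemma is a counting argument, not an algorithm), and it is also unnecessary, because every quartic $W^{2\alpha,2\beta}$ has both power-exponents at least $2$ and is therefore generated by some connected component, hence already enumerated by the component-based procedure together with \cref{thm:occ_prim} for the vertices themselves.
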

\begin{proof}
We first apply~\cref{lem:GW}. Now consider a fixed primitive $W$ of height $c$ and width $r$.
Let us note that if $(i,j),(i',j')$ belong to the same connected component $H$ of $G_W$, then $i \equiv i' \pmod{r}$ and $j \equiv j' \pmod{c}$. We say that a connected component $H$ of $G_W$ \emph{generates} an occurrence of a power $W^{\alpha,\beta}$ if the $\alpha \beta$ occurrences of $W$ that are implied by it belong to $H$. If $W^{\alpha,\beta}$ has an occurrence in $A$, then it is generated by some connected component $H$ of $G_W$, unless $\min(\alpha,\beta)=1$.

\newcommand{\MaxPowers}{\mathit{MaxPowers}}
We say that $W^{\alpha,\beta}$ is a \emph{maximal} power if there is no other power $W^{\alpha',\beta'}$ in $A$ such that $\alpha' \ge \alpha$, $\beta' \ge \beta$, and $(\alpha',\beta') \ne (\alpha,\beta)$.
Similarly, we consider powers that are maximal among ones that are generated by a connected component $H$. Let $\MaxPowers_W(H)$ be the set of maximal powers generated by a connected component $H$. It can be computed in linear time using Lemma~\ref{lem:MWR} as shown in Algorithm~\ref{alg:H}, which we now explain.

For each vertex $(i,j)$ in $H$, we insert four points to a set $S$, which correspond to the four occurrences of $W$ underlying the occurrence of quartic $W^{2,2}$ at position $(i,j)$.
If $S$ is treated as a set of white cells in a grid, then $W^{\alpha,\beta}$ for $\alpha>1$ is a power generated by $H$ if and only if the grid contains a white rectangle of shape $(\alpha,\beta)$. For a cell $(i,j) \in S$, we denote $R[i,j] = \min\{p \ge 0\,:\, (i,j+p) \not\in S \}$. Assuming that the cells of $S$ are sorted by non-increasing second component, each value $R[i,j]$ can be computed from $R[i,j+1]$ in constant time, for a total of $\cO(|S|)$ time.
The sorting for all $S$ can be done globally, using radix sort.
Also, the array $R$ can be stored globally and used for all $S$, cleared after each use.
Finally, we process each maximal set of consecutive cells $(i,j),\ldots,(i+m-1,j) \in S$ that are located in the same column and apply Lemma~\ref{lem:MWR} to solve the resulting instance of the \textsc{Max White Rectangles} problem.
The total time required by this step is $\cO(|S|)$.

\begin{algorithm}[htpb]
\DontPrintSemicolon
  $S:=\emptyset$\;
  \ForEach{$(i,j)$ \KwSty{in} $V(H)$}{
    $a:=\floor{i/r}$;\ \ $b=\floor{j/c}$\;
    $S:=S\cup\{(a,b), (a+1,b), (a,b+1), (a+1,b+1)\}$\;
  }
  \vspace*{0.2cm}
  $R[0 \dd n,0 \dd n]:=(0,\ldots,0)$\;
  \ForEach{$(i,j)$ \KwSty{in} $S$ in non-increasing order of $j$}{
    $R[i,j]:=R[i,j+1]+1$\;
  }
  \vspace*{0.2cm}
  $\mathit{Result}:=\emptyset$\;
  \ForEach{maximal set $\{(i,j),(i+1,j)\ldots,(i+m-1,j)\} \subseteq S$}{
    $\Whites[1 \dd m] := R[i \dd i+m-1,j]$\;
    $\mathit{Result}:=\mathit{Result} \cup  \textsc{MaxWhiteRectangles}(\Whites)$\;\vspace*{0.3cm}
  }
  remove redundant rectangles from $\mathit{Result}$\;
  return $\mathit{Result}$\;
  \;
\caption{Computing $\MaxPowers_W(H)$ for a component $H$ of $G_W$.
}\label{alg:H}
\end{algorithm}

In the end we filter out the powers $W^{\alpha,\beta}$ that are not maximal in $A$ similarly as in the proof of Lemma~\ref{lem:MWR}, using a global array $\MaxWidth$.
Let $W^{\alpha_1,\beta_1},\ldots,W^{\alpha_k,\beta_k}$ be the resulting sequence of maximal powers, sorted by increasing first component, and let $\alpha_0=\beta_0=0$. Then the set of all quartics in $A$ with primitive root $W$ contains all $W^{2\alpha,2\beta}$ over $\alpha_{p-1} < 2\alpha \le \alpha_p,\, 1 \le 2\beta \le \beta_p$, for $p \in [2,k]$.
They can be reported in $\cO(n^2 \log^2 n)$ total time over all $W$ due to the upper bound of Theorem~\ref{thm:dist_q}.
\end{proof}

\section{Final Remarks}
We showed that the numbers of distinct runs and quartics in an $n \times n$ array are $\cO(n^2\log^2 n)$.
This improves upon previously known estimations.
We also proposed $\cO(n^2 \log^2 n)$-time algorithms for computing all occurrences of primitively rooted quartics and all distinct quartics.
A straightforward adaptation shows that for an $m \times n$ array these bounds and complexities all become $\cO(mn \log m \log n)$.

We pose two conjectures for $n\times n$ 2D-strings: 
\begin{itemize}
    \item The number of 2D-runs is $\cO(n^2)$.
    \item The number of distinct quartics is $\cO(n^2)$. 
\end{itemize}

\bibliographystyle{plainurl}
\bibliography{references}

\clearpage
\appendix

\section{Alternative Algorithm for the Proof of Lemma~\ref{lem:MWR}}\label{app:WR}
An alternative, space efficient and more direct 
algorithm that does not use additional tables $\NSVDown$ and $\NSVUp$,
is shown below. The algorithm  computes only the table $\MaxWidth$. Then, 
we can use the second phase from Algorithm~\ref{alg:NSV2}.
We assume that the table $\MaxWidth$ is initially filled with zeros.

\begin{algorithm}[htpb]
\DontPrintSemicolon
  $\Whites[0] := \Whites[m+1] :=
  0\;$\;
  $S:=$ empty stack; $\ppush(S,0)$\;\vspace*{0.1cm}
  \For{$i:=m$ \KwSty{down to} $0$}{\vspace*{0.1cm}
    \While{$\Whites[i] < \Whites[\ttop(S)]$}{
      $k:= \ttop(S)$;\
      $h := \ttop(S)-i-1$\;
      $\MaxWidth[h]:=\max(\MaxWidth[h],\Whites[k])$\;
       $\ppop(S)$\;
\vspace*{0.1cm}    }
    \lIf{$\Whites[\ttop(S)]=\Whites[i]$}{$\ppop(S)$}
    $\ppush(S,i)$\;
  }
  \;
\caption{Alternative implementation of the first phase
in Algorithm~\ref{alg:NSV2}.}
\label{alg:NSV}
\end{algorithm}

The algorithm is a
version of a folklore  
 algorithm for the Nearest Smaller  Values problem and correctness can be shown
 using the same arguments.
 If $\Whites[i]<\Whites[i+1]$, then the algorithm  produces
 shapes of 
 all Max White Rectangles
anchored at $i+1$,
otherwise $i+1$ is ``nonproductive''. 
Observe that  $i+1=\ttop(S)$ when we start processing $i\ge 1$.

Let us analyze the time complexity of the algorithm. In total $m+2$ elements are pushed to the stack. Each iteration of the while-loop pops an element, so the total number of iterations of this loop is $\cO(m)$. Consequently, the algorithm works in $\cO(m)$ time. In the end one needs to filter out non-maximal rectangles as in the 
previous proof of Lemma~\ref{lem:MWR}.

\end{document}